\setlist[itemize]{leftmargin=5.0mm}
\def\endthebibliography{%
  \def\@noitemerr{\@latex@warning{Empty `thebibliography' environment}}%
  \endlist
}
\newtheorem{theorem}{Theorem}
\newtheorem{lemma}{Lemma}
\newtheorem{remark}{Remark}
\newtheorem{proposition}{Proposition}
\newtheorem{definition}{Definition}
\newtheorem{corollary}{Corollary}
\newtheorem{property}{Property}
\newtheorem{problem}{Problem}
\newcommand{\sd}{\pi}
\newcommand{\A}{\mathcal{A}}
\newcommand{\h}{\mathcal{H}}
\newcommand{\E}{\mathbb{E}}
\newcommand{\MBS}{{\rm ModifiedBS\mbox{-}\pi}}
\newcommand{\BS}{{\rm BS\mbox{-}\pi}}
\newcommand{\RBS}{R_{{\rm BS\mbox{-}\pi}}}
\newif\ifSOC
    \newcommand{\review}[1]{{\color{black}#1}}
    \newcommand{\review}[1]{{\color{black}#1}}
\begin{document}
%

\title{Balanced Splitting: A Framework for Achieving Zero-wait in the Multiserver-job Model}

%
%
%
%


\author{Jonatha~Anselmi and Josu~Doncel
\IEEEcompsocitemizethanks
{
\IEEEcompsocthanksitem J. Anselmi is with
Univ. Grenoble Alpes, CNRS, Inria, Grenoble INP, LIG, 38000 Grenoble, France.
E-mail: jonatha.anselmi@inria.fr

\IEEEcompsocthanksitem J. Doncel is with the University of the Basque Country, UPV/EHU, Barrio sarriena s/n, 48940 Leioa, Spain. 
E-mail: josu.doncel@ehu.eus
}
}

\IEEEpubid{}


\IEEEtitleabstractindextext{%
\begin{abstract}
We present a new framework for designing nonpreemptive and job-size oblivious scheduling policies in the multiserver-job queueing model. The main requirement is to identify a \emph{static and balanced sub-partition} of the server set and ensure that the servers in each set of that sub-partition can only handle jobs of a given \emph{class} and in a first-come first-served order. A job class is determined by the number of servers to which it has exclusive access during its entire execution and the probability distribution of its service time.
This approach aims to reduce delays by preventing small jobs from being blocked by larger ones that arrived first, and it is particularly beneficial when the job size variability intra resp. inter classes is small resp. large.
In this setting, we propose a new scheduling policy, Balanced-Splitting.
In our main results, we provide a sufficient condition for the stability of Balanced-Splitting
and show that the resulting queueing probability, i.e., the probability that an arriving job needs to wait for processing upon arrival, vanishes in both the subcritical (the load is kept fixed to a constant less than one) and critical (the load approaches one from below) many-server limiting regimes.
Crucial to our analysis is a connection with the M/GI/$s$/$s$ queue and Erlang's loss formula, which allows our analysis to rely on fundamental results from queueing theory.
Numerical simulations show that the proposed policy performs better than several preemptive/nonpreemptive size-aware/oblivious policies in various practical scenarios. This is also confirmed by simulations running on real traces from High Performance Computing (HPC) workloads.
The delays induced by Balanced-Splitting are also competitive with those induced by state-of-the-art policies such as First-Fit-SRPT and ServerFilling-SRPT, though our approach has the advantage of not requiring preemption, nor the knowledge of job sizes.
\end{abstract}




\begin{IEEEkeywords}
Multi-server jobs, online scheduling, queueing probability, zero-wait, Erlang's loss formula, asymptotic optimality
\end{IEEEkeywords}
}

\maketitle

\IEEEdisplaynontitleabstractindextext

%
\IEEEpeerreviewmaketitle

\IEEEraisesectionheading{\section{Introduction}\label{sec:introduction}}

%
%
%




\IEEEPARstart{T}{he multiserver-job}
queueing model has recently gained traction in the performance analysis of parallel systems because it captures a key aspect of today's High Performance Computing (HPC) and cloud computing systems \cite{MSQM22,MaxWeight,Grosof22SRPT,Danilo19SAF}.
Within this model, each job can occupy multiple servers simultaneously for the entire duration of the job itself.
Servers are an abstraction for processing resources:
in HPC systems, they refer to CPUs, cores, GPUs or nodes;
in cloud systems, they refer to virtual machines or containers; and
in serverless computing, they refer to cloud functions, instances or replicas.
The number of servers required by each job is referred to as \emph{server need} of that job,
and logs of real parallel workloads from production systems reveal that they range from one to a few thousand, depending on the application and on the resource type~\cite{feitelson2014experience,rep}.
In turn, these workloads run on large systems that are composed of up to millions of resources.
According to the TOP500 project \cite{TOP500},
the United States' Frontier is currently the most powerful supercomputer, reaching 8,699,904 total cores.

Recently, the interest for multiserver jobs has profoundly increased due to the massive use of deep learning technologies and applications, which rely on highly-parallel machine learning jobs like TensorFlow~\cite{TensorFlow} to handle several aspects of our daily life.
\review{Beyond the case of machine learning jobs, examples of jobs that typically require a specific number of servers are easily found
in linear algebra applications that rely on singular value or matrix decompositions,
in image processing scenarios such as neuroscience applications \cite{aupy}, where different servers operate on different sets of pixels, etc.
Finally, in \cite{tirmazi2020borg,Google_traces} the authors show that tasks submitted to Google's Borg Scheduler (which represent jobs executing internet-facing services such as Docs, Sheets, and Gmail, for instance, or jobs from the internal tools and services)
require a specific number of server, which can vary by five orders of magnitude across jobs (see also \cite{verma2015large}).}

In multiserver-job queueing models,
jobs wait in a central queue before being processed,
they then hold the simultaneous possession of multiple servers according to their server needs, and finally, they leave the system permanently upon service completion.
These models differ from standard queueing models where each job can only occupy one server, and they present the technical challenge of designing {scheduling policies} that can efficiently pack jobs onto servers ensuring full capacity usage whenever the number of jobs exceeds the number of servers. In this case, the corresponding scheduling policy is said \emph{throughput optimal}. However, it may be impossible to design a system that prevents a loss of capacity. For example, consider the case where the number of servers is odd but each job requires an even number of servers.

The fundamental problem is to design scheduling policies that yield high utilization/throughput and/or low delay/makespan. A scheduling policy for the multiserver-job model determines which jobs are processed by which servers at any given time.
This problem is nontrivial and referred to in the literature as the ``parallel on-line job scheduling problem'' \cite{Brucker}.
Since these policies may run on massively large computer systems, in practice HPC and cloud platform administrators opt to use simple heuristics, with low computational complexity.

In the last decades, several works have addressed the problem of designing scheduling policies for multiserver jobs.
Most existing studies focus their performance analysis on settings in finite time or with a finite number of jobs. In this paper, we consider an infinite stream of stochastically arriving jobs and investigate performance (stability and mean delays) in a \emph{steady-state} regime. The scientific literature has received relatively little attention in this queueing theoretic setting, which is attributed to the increased mathematical complexity caused by the non-conventional structure of multiserver jobs.
However, several new scheduling policies and interesting related results have recently been developed in the literature. Now, we provide an overview.

\subsection{Common scheduling policies and positioning}
\label{ssec:msj}

The list of scheduling policies for multiserver jobs investigated in the literature is huge~\cite{Brucker,Danilo19SAF,Patton,Grosof22SRPT} and
our goal here is to provide the necessary high-level background highlighting the difference of our work.
In general, it is difficult to establish whether a policy is better than another because the answer strongly depends on i) the assumptions made on the underlying architecture and ii) the information available to the scheduler.
In this respect, existing scheduling policies may be classified according to whether they require {preemption} or {job-size awareness}:
\begin{itemize}
 \item A scheduling policy is \emph{preemptive} if jobs can be stopped during their execution and resumed afterward at any point in time. While preemption may improve performance theoretically, it has some important limitations.  First, as discussed in~\cite{MaxWeight}, stopping and resuming jobs come with non-negligible switching costs, and job migration is also involved. Due to analytical intractability, existing queueing models do not usually take these costs into account, which leads to optimistic performance analyses. A further disadvantage of preemption is the starvation of jobs requiring a large number of servers.

 \item A scheduling policy is \emph{size aware} if at any point in time the scheduler knows the remaining processing time of each job together with its server need. Here, the celebrated example is the Shortest Remaining Processing Time scheduling discipline, which in the single-server job model is known to minimize delays in a strong sense~\cite{Schrage68}. The downside of this requirement is that in practice such information is rarely available. While users of HPC systems usually submit jobs together with estimations of their running times, it is well known that these are largely imprecise; see~\cite[Section~1.1]{LB_size_testing}
\end{itemize}

In turn, existing performance analyses of the multiserver-job queueing model may be classified according to whether they consider a \emph{transient} or \emph{steady-state} regime.
The former considers the scheduling of a finite number of jobs, while the latter considers an infinite stream of jobs entering the system at the jump times of an exogenous stochastic point process.
In this paper, we are interested in the steady-state regime.
In this setting, a primary performance objective concerns throughput optimality, which roughly means that ``the system is stable whenever the arrival rate is smaller than the service rate''.
Then, another important objective concerns the analysis of mean delay, or mean response time.
Unfortunately, tractable analytical formulas for mean delays are usually out of reach, even assuming that jobs arrive according to a Poisson process. For this reason, existing approaches investigate dynamics in some limiting regime of practical interest as this often leads to the identification of simple asymptotic formulas that turn out to be extremely accurate in the pre-limit.
In the literature, researchers have considered limiting regimes
where
i) the arrival rate approaches the boundary of the stability region from below while keeping constant the overall number of servers (\emph{heavy traffic} limiting regime)~\cite{Grosof22SRPT}, or
ii) the traffic demand grows to infinity together with the total number of servers (\emph{many servers} limiting regime)~\cite{Zwart,ZeroQueueing}.
In this paper, we are interested in the latter approach, as it fully captures the nature of real computer systems, which are composed of thousands, if not millions, of servers~\cite{TOP500}.

The requirements and features of the most common and newest scheduling policies proposed in the literature are summarized in Table~\ref{tab:summary_policies}; we point the reader to Section~\ref{sec:related_work} for more details about these scheduling policies.
In the list, we point out \emph{Balanced Splitting}, a scheduling policy that we propose in this work and discuss below.

\begin{table*}[t]
\begin{center}
\bgroup
\def\arraystretch{1.3}%
\begin{tabular}{lc|c||c|c}
\multicolumn{1}{c}{} & \multicolumn{2}{c}{\large{Requirements}} &  \multicolumn{2}{c}{\large{Performance}} \\
[-2ex]
\multicolumn{1}{c}{} & \multicolumn{2}{c}{\downbracefill}& \multicolumn{2}{c}{\downbracefill}\\
%
\multicolumn{1}{c}{} & Preemptive & Size-aware & \makecell{Throughput optimality}   &  \makecell{Many-server \\ delay asymptotics}  \\
\hline
\hline
\multicolumn{1}{l}{First-Come First-Served}  &  No  & No   & Almost$^*$ \cite{ZeroQueueing}  &  Yes \cite{ZeroQueueing}\\ \cline{2-5}
%
%
%
\multicolumn{1}{l}{Most Servers First \cite{BestFit,5493430}}  & Yes   &  No  & No  \cite{MaxWeight}  &  N/A \\ \cline{2-5}
%
\multicolumn{1}{l}{First-Fit Back-Filling}  & No   &  No  & ?  & N/A \\ \cline{2-5}
%
%
\multicolumn{1}{l}{First-Fit SRPT}  & Yes   &  Yes  & ?  &N/A \\ \cline{2-5}

\multicolumn{1}{l}{Max-Weight \cite{MaxWeight}}  &  Yes  & No   & Yes  &  N/A \\ \cline{2-5}
\multicolumn{1}{l}{GreedySRPT \cite{Danilo19SAF}}  &  Yes  & Yes   & ?  &N/A \\ \cline{2-5}
\multicolumn{1}{l}{ServerFilling \cite{Grosof22WCFS}} &  Yes  &  No  &  Yes$^{**}$ & N/A \\ \cline{2-5}
\multicolumn{1}{l}{ServerFilling-SRPT \cite{Grosof22SRPT}} &  Yes  &  Yes  &  Yes$^{**}$ & N/A \\ \cline{2-5}
\multicolumn{1}{l}{ServerFilling-Gittins \cite{Grosof22SRPT}} &  Yes  &  No  & Yes$^{**}$  & N/A \\ \cline{2-5}
%
\hline
\hline
\multicolumn{1}{l}{\makecell{\\\textbf{BalancedSplitting}-$\pi$ \\ (this paper)\\ \\}} & {{No}}   &  {No}  & {Almost$^{*}$}  &  {Yes}\\
[1.5ex]
\hline
\hline
\multicolumn{5}{l}{ $^{*}$: \emph{No} in general and \emph{yes} asymptotically in a many-server limiting regime, under mild conditions.} \\
\multicolumn{5}{l}{ $^{**}$: Provided that the overall number of servers and all the server needs are a power of two.} \\
\end{tabular}
\egroup
\end{center}
\caption{Summary of the requirements and performance of the most common and latest scheduling policies.}
\label{tab:summary_policies}
\end{table*}

\subsection{Balanced Splitting}
\label{ssec:contribution}

We introduce a new framework, referred to as Balanced Splitting Framework (BSF), for the design of nonpreemptive, job-size oblivious scheduling policies in the multiserver-job model.
The main idea is to reduce interference between small and large jobs by keeping them separate to some extent.
It works as follows.
First, divide jobs into \emph{classes}, where a job class is determined by its server need and its service time probability distribution.
Then, identify a static \emph{sub}-partition of the server set, say $\A:=\cup_i \A_i$, in proportion to the relative demand of each job class $i$, where static means independently of the job arrival process.
Finally, ensure that each subset of servers $\A_i$ can only process jobs of class $i$ and in a first-come first-served order. The servers outside $\A$ are called \emph{helpers} and their set is denoted by~$\h$.
%
Within BSF, we propose the following policy:
\vspace{0.12cm}
\begin{quote}
\textbf{BalancedSplitting}-$\pi$:
\emph{Upon arrival of a class-$i$ job, send it to servers in $\A_i$ if enough idle servers for immediate processing exist, otherwise send it to the helper set for \emph{potential} processing by the servers in~$\h$. In the helper set, process jobs according to  $\pi$, an auxiliary scheduling policy for multiserver jobs. Upon completion of service of a class-$i$ job in $\A_i$, assign to $\A_i$ the oldest class-$i$ job that is waiting for service in the helper set, provided that it exists.}

\end{quote}
\vspace{0.12cm}
While $\pi$ can be any of the policies discussed above, to preserve a simple structure we assume that it is nonpreemptive, size-oblivious, and only acting on the helpers, i.e., it operates independently of the state of the servers in~$\A$.

\subsection{Main contribution}

\review{Our main objective is to develop a scheduling policy yielding the so-called \emph{zero-wait property}, i.e.,
a condition where the mean job waiting time vanishes when the system size grows large together with the traffic demand.
In the queueing literature, this property has been considered in several works, e.g., \cite{Stolyar2015,Gamarnik2016,Anselmi20}, while for now it has received less attention in multiserver-job queueing models \cite{ZeroQueueing}; see Section~\ref{sec:related_work} for more details.
Within our approach, we address this objective by investigating the following question:
}
\vspace{0.12cm}
\begin{quote}
\emph{Within BalancedSplitting-$\pi$, under which conditions on the model parameters is it possible to neglect the probability, say $P_{\h}$, that a job is sent to the helper set~$\h$ upon its arrival?}
\end{quote}
\vspace{0.12cm}
\review{Here, the idea is the following}:
if $P_{\h}$ is ``small'', then the impact of the auxiliary policy $\pi$ is ``neglibible'' and the overall performance is captured by the dynamics that only occur in the $\A$ system.
This implies the {zero-wait property} because jobs that join the $\A$ system are processed immediately upon arrival.

We answer the question above in the affirmative within two limiting regimes of practical interest that are motivated by the huge size of real systems.
Specifically, in Theorems~\ref{th:subcritical} and~\ref{th:critical} we show that $P_{\h}$ vanishes
in both a \emph{subcritical} and a \emph{critical} many-server limiting regime where also the server needs are allowed to scale to infinity.
In the former, the demand and the overall number of servers grow to infinity in proportion while the load $\rho$ is kept fixed to a constant less than one.
In the latter, the difference is that at the same time the load $\rho$ approaches one from below.
Both regimes have been widely considered in the queueing literature \cite{Kelly91Loss,HW81,Hunt1989}, and in our case they are well justified because of the massive size of modern computer systems.
The critical many-server regime considered in this paper corresponds to the Halfin-Whitt regime, which has been mainstream in the queueing literature since the seminal work in~\cite{HW81}.
This regime aims to balance between efficiency and quality of offered service and is also known as Quality and Efficiency Driven (QED).

At the core of our proofs, there is connection a connection with the M/GI/$s$/$s$ queue, which is a well-studied object in queueing theory since the pioneering works of Erlang.
More specifically, we show that the dynamics induced by BalancedSplitting-$\pi$ can be bounded by the dynamics of a modified version of BalancedSplitting-$\pi$ ensuring that the~$\A$ system can be decomposed into multiple independent M/GI/$s$/$s$ queues.
Within this decomposition, the $\A$ system becomes
tractable from a mathematical point of view and
the question above can be essentially rephrased as
``\emph{Under which conditions is it possible to make the blocking probability of an M/GI/$s$/$s$ queue vanish?}''.
This question has a long history in queueing theory since the work of Erlang (see, e.g., \cite{Zwart08Erlang}), and here we use existing theorems to establish our results.

Figure~\ref{fig:1} plots the simulated mean response time induced by BalancedSplitting-$\pi$ and other scheduling policies in the critical many-server limiting regime.
Here, the server needs grow with rate $f_k= \lfloor (k/32)^{2/3} \rfloor$ where $k$ is the overall number of servers, and the load $\rho=\rho(k)$ satisfies $(1-\rho)\sqrt{k/f_k}\to\theta=0.7$ as $k\to\infty$.
Job sizes are \emph{highly variable} and follow the model of ``several small and few large'' jobs, which is typical in HPC and cloud systems.
The relative demand of large resp. small jobs is $80\cdot 0.05f_k =4 f_k$ resp.  $0.95f_k$; see the figure caption for details.
For BalancedSplitting-$\pi$, we assume that the auxiliary policy~$\pi$ is First-Come First-Served (FCFS).
\begin{figure}
\centering
\makebox[1.0\columnwidth][c]{\hspace{0.3cm}\includegraphics[height=7.9cm]{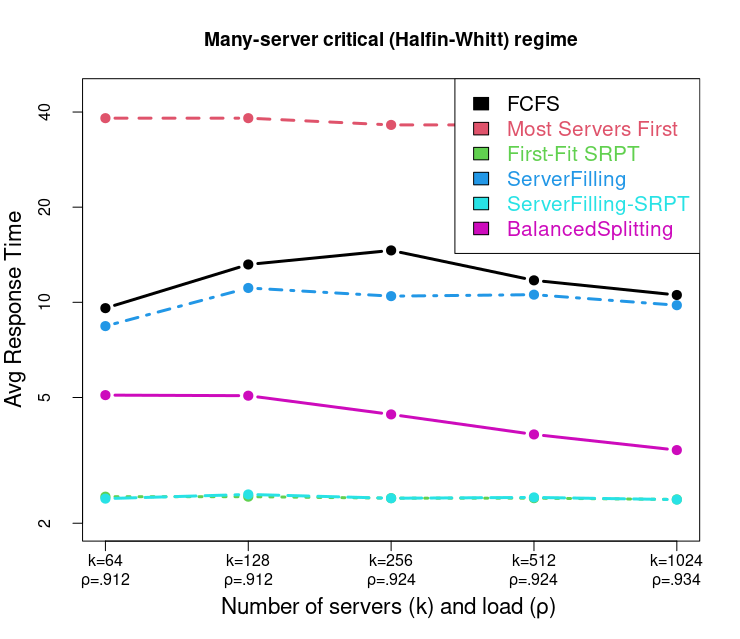}}%
%
%
\caption{
Simulated mean response time in the critically loaded limiting regime (Halfin--Whitt).
Small resp. large jobs arrive with probability 0.95 resp. 0.05.
Small jobs have server need and mean service time equal to $(f_k,1)$, where
$f_k= \lfloor (k/32)^{2/3} \rfloor$.
Large jobs have server need and mean service time equal to (2$f_k$,40), (4$f_k$,20) or (8$f_k$,10), with equal probability.
Service times are exponentially distributed and all independent.
Arrivals are Poisson.
Each simulation run uses $10^6$ arrivals.
The curves associated to ServerFilling-SRPT and First-Fit SRPT almost overlap.
}
%
\label{fig:1}
\end{figure}
First, we notice that BalancedSplitting-$\pi$  outperforms FCFS.
It also performs better than ServerFilling, which is preemptive, and provides average response times that are competitive with those achieved by ServerFilling-SRPT and First-Fit SRPT, but without incurring the cost of preemption, nor requiring the knowledge of job sizes.
\review{
In Section~\ref{sec:numerical}, we will complement our analysis by presenting an additional set of numerical simulations against both synthetic and real data traces. These will provide further evidence of the benefits of our approach and scheduling policy.
}


\section{Scheduling policies review}
\label{sec:related_work}

This section complements Section~\ref{ssec:msj} and reviews in more detail the most common and latest scheduling policies proposed in the literature (see also Table~\ref{tab:summary_policies}).
Several of these are considered in this paper for benchmarking purposes.

\begin{itemize}

\item  FCFS: First-Come First-Served. Jobs are processed in order of arrival if there are enough servers to meet their needs, otherwise they wait in the central queue. A queueing theoretic analysis of the resulting multiserver-job queueing model is challenging even assuming exponentially-distributed job durations and Poisson arrivals~\cite{MSQM22,ZeroQueueing}.

\item Most Servers First.
At all times, jobs with the highest server need are processed first among all jobs that can be served, with preemption
\cite{5493430,BestFit}. This policy is also known as Best-Fit~\cite{MaxWeight}.
%

\item \review{Least Servers First. At all times, priority is given to jobs with the smallest server needs~\cite{Grosof22WCFS}.
Within the same parameter settings used for Figures~\ref{fig:1},~\ref{fig:2} and \ref{fig:KIT}, our simulations indicate that the response times induced by Least Servers First are huge. In fact, this policy does not seem stable within the considered range of parameters. For this reason, in the remainder we omit in our plots the corresponding numerical results.
}

\item  First-Fit Back-Filling. As FCFS, but if there are idle servers and jobs waiting in the queue that can be executed, then the first arrived job that fits the idle servers is processed.
Several variants of this ``Back-Filling'' type of policy are available in the literature \cite{1039773,Danilo19SAF}.

\item  First-Fit Shortest Remaining Processing Time (SRPT).
Serve the jobs with the least remaining processing time, regardless of their server needs. If a job has a higher server need than the remaining number of servers available, skip that job and continue through the list of jobs, placing jobs into service if sufficient servers are available, until all servers are full, or all jobs are assigned. This policy resembles policies like ``Smallest Area First'' and EASY BackFilling~\cite{Danilo19SAF}.




\item  MaxWeight. At all times, choose a packing $x$ that maximizes $\sum_n Q_n x_n$ where $Q_n$ is the number of jobs with server need~$n$ that are in the system and~$x_n$ is the number of jobs with server need $n$ that are served by packing~$x$~\cite{MaxWeight}.
While MaxWeight is throughput optimal, it is not practical as it requires the scheduler to solve an NP-hard optimization problem any time a job arrives or departs.
There is a myopic version MaxWeight that is nonpreemptive and allocates a new job to a server using current queue length information at job departure times. Under some technical assumptions, it can achieve any arbitrary fraction of the stability region~\cite{MaxWeight}.

\item  ServerFilling. Identify the minimum set of jobs $M$ that collectively require a number of servers at least equal to~$k$, in order of arrival. If such a set is empty, all jobs present in the system are served simultaneously (if any exist), otherwise, the jobs in $M$ are ordered by their server needs, from largest to smallest (with ties broken by arrival order), and placed into service in that order until no more servers are available. The overall number of servers $k$ and server needs do not need to be a power of two. If $k$ is not a power of two, a job $j$ in $M$ may not be packed in even when idle servers exist, and in this case the scheduling is completed with all the jobs arrived later than $j$ blocked and some servers unused. ServerFilling has been proposed in  \cite{Grosof22WCFS}.

\item  ServerFilling-SRPT \cite{Grosof22SRPT}. Order jobs in increasing order of remaining size $r_j$, breaking ties arbitrarily, so that $r_{j_1}\le r_{j_2},\ldots$ for jobs $j_1$, $j_2$, etc. Here, the size of a job is its remaining service time times its service need. Let $m^*$ be the smallest integer such that $$
    \sum_{m=1}^{m^*} n_{j_m} \ge k,
    $$ where $n_{j_m}$ is the server need of job $j_m$ If there is no such index, then serve all jobs in the system simultaneously (if any exists). Otherwise, serve the jobs in $M:=\{j_1,j_2,\ldots,j_{m^*}\}$ prioritizing jobs of largest server need until no servers remain or the next job cannot fit, breaking ties by smallest remaining size.

\item  ServerFilling-Gittins \cite{Grosof22SRPT}. Identical to ServerFilling-SRPT except that jobs are ordered in increasing order of \emph{Gittins rank}, which for a job with server need $i$, service time $S_i$ and age (time spent in the system) $a$ is defined as
$
\inf_{b>a}  {\mathbb{E}[\min(S_i,b)-a\mid S_i> a] / P(S_i\le b\mid S_i> a)}
$.
If the service times of jobs with a given server need are deterministic or if job sizes are known, then ServerFilling-Gittins and ServerFilling-SRPT are equivalent.

\end{itemize}


Most of the existing policies require preemption. As discussed above, this is costly in practice.

Throughput optimality for the recent ServerFilling-* policies holds if the overall number of servers $k$ and all the server needs are a power of two. Within this assumption, it is possible to perfectly pack jobs into the server set whenever the overall amount of server need exceeds the number of servers.
The family of DivisorFilling-* policies discussed in \cite{Grosof22SRPT} weakens this assumption requiring that all server needs are a divisor of~$k$. Their applicability remains limited because in practice the overall number of servers is hardly a power of two~\cite{TOP500}. One of our goals is to develop a scheduling policy that is more flexible in this respect.

\subsection{Queueing probability and the zero-wait property}

In the queueing literature, many-server limiting regimes have been extensively considered to evaluate the queueing probability, i.e., the probability that an arriving job needs to wait upon arrival.
In several cases, this probability converges to zero in the limit, yielding the zero-wait property~\cite{HW81,Kelly91Loss,Hunt1989,Zwart08Erlang}.
In the multiserver-job queueing model however, only a few results are available.
In~\cite{ZeroQueueing}, the authors analyze the FCFS scheduling policy under the assumption that service times follow an exponential distribution.
While they prove that FCFS is not throughput optimal, the suboptimality gap vanishes in a many-server limiting regime and the zero-wait property is achieved. In this paper, we follow a similar approach to theirs, but with respect to our new policy BalancedSplitting-$\pi$.

\section{Framework}
\label{sec:model}

This section contains our contribution from a modeling and algorithmic point of view.
First, it describes the multiserver-job queueing model as in~\cite{ZeroQueueing}. Then, it defines the proposed Balanced Splitting framework introducing two scheduling policies and related performance indicators.

\subsection{Multiserver-job queueing model}

We consider a queueing system where an infinite stream of jobs needs to be processed by~$k$ parallel servers working with unitary speed.
Jobs join the system following a Poisson process with rate~$\lambda$ and can be partitioned into $C$~\emph{classes}. A job is of class~$i$ with probability $\alpha_i$, for all $i=1,\ldots,C$.
Class-$i$ jobs require the simultaneous possession of $n_i\le k$ servers for a random amount of time equal in distribution to the random variable~$D_i$, for all $i=1,\ldots,C$.
After processing, they leave the system.
In the following, $n_i$ (a constant) and $D_i$ will be respectively referred to as ``server need'' and ``service time'' of class-$i$ jobs.
%
%
The stochastic sequences of job inter-arrivals, service times, and classes are assumed i.i.d. and independent.

Let $d_{i}:=\E[D_i]$. We denote by $\varrho_{i} :=  \alpha_i d_{i} n_i$ the \emph{relative demand} induced by class-$i$ jobs, and let $\varrho :=  \sum_{i=1}^C  \varrho_{i}$.
We assume that the \emph{load}
\begin{equation}
\label{eq:stab_cond}
\rho := \frac{\lambda}{k} \varrho < 1,
\end{equation}
which is necessary to ensure stability (positive Harris recurrence) of the underlying Markov process.
Note that \eqref{eq:stab_cond} corresponds to the usual stability condition ``the overall arrival rate is less than the overall service rate''.


%

Within the model described above, a \emph{scheduling policy} is a rule that decides which jobs are processed by the servers at any point in time.

\subsection{Partitioning the set of servers}
\label{sec:partitioning}
For a set $S$, let $|S|$ denote its cardinality.
We let sets $(\A_{i})_{i}$ and $\h$ form a partition of the set of servers $\{1,\ldots,k\}$ such that for all~$i$
\begin{subequations}
\label{eq:A_and_B}
\begin{align}
\label{eq:n_ij}
|\A_{i}| & =\left \lfloor \psi\, \frac{k}{ n_i} \,\frac{\varrho_{i}}{\varrho}  \right\rfloor n_i = : a_{i} \\
|\h| & =  k-\sum_{i=1}^C a_{i}
\end{align}
\end{subequations}
where $\psi=1$ if $\frac{k}{ n_i} \,\frac{\varrho_{i}}{\varrho}$ is an integer for all $i$,
and otherwise $\psi = \max\{x\in[0,1]: k-\sum_i \lfloor x\, \frac{k}{ n_i} \,\frac{\varrho_{i}}{\varrho}  \rfloor n_i \ge \max_i n_i\}$.

The servers in~$\A_{i}$ will only process jobs of class $i$, while the servers in $\h$ will process jobs of potentially any class and are interpreted as \emph{helpers}.
We refer to $\cup_{i} \A_{i}$ as the $\A$ system.

The parameter~$\psi$ ensures that the helper set contains enough servers to serve jobs of any class, provided that it is non-empty.
To minimize the interference between different job classes, and therefore to fully exploit the benefits of our approach,
we want the helper set to be small compared to $\A$.
To achieve this, it should be clear that $\psi$ should be as close as possible to one. However, the choice~$\psi=1$ may not always be convenient because the resulting number of helpers may be less than some job's server need. Hence, the choice of $\psi$ given above.


In the following, our analysis assumes that $a_{i}>0$ for all $i$; if this is not the case, one can always redirect class-$i$ jobs to the helpers.

Finally, we notice that the helper set is empty only when $\frac{k}{ n_i} \,\frac{\varrho_{i}}{\varrho}$ is an integer for all $i$.
This condition has ``zero measure'' because $\varrho_i$ is a real number. To avoid unnecessary and minor complications, e.g., divisions by zero, in the remainder we assume that $\h$ is non-empty ($|\h|\ge\max_i n_i$). If $\h$ would be empty, all of our results remain true trivially.

\subsection{Balanced Splitting framework}
\label{sec:BS}

The Balanced Splitting Framework (BSF) specifies a principle for scheduling policies.
Namely, a scheduling policy belongs to BSF if the subset of servers $\A_{i}$, defined in Section~\ref{sec:partitioning}, process jobs of class~$i$ only and according to FCFS.
In this manner, job class information is exploited to reduce the interference between jobs with different server need requirements, as their processing is forced to occur on different sets of servers.

Inside BSF, we propose the scheduling policy \emph{Balanced Splitting}-$\pi$, denoted by $\BS$, defined as follows.

\begin{definition}
\label{def:BS}
%
$\BS$ is the BSF policy that is based on the following rules:
\begin{enumerate}
 \item
Upon the arrival of a class-$i$ job, send it to servers in $\A_i$ if enough idle servers for immediate processing exist, otherwise send it to the helper set for \emph{potential} processing by the servers in~$\h$.

\item
In the helper set, process jobs according to $\pi$, an auxiliary scheduling policy for multiserver jobs. The policy $\pi$ is nonpreemptive, size-oblivious, and operates independently of the~$\A$ system.

\item
Upon completion of service of a class-$i$ job in $\A_i$, assign to $\A_i$ the oldest class-$i$ job that is waiting for service in the helper set, provided that it exists.
\end{enumerate}

\end{definition}


Several other policies can comply with the BSF principle above.
First, $\pi$ may be preemptive and/or size-oblivious.
Then, point (iii) in Definition~\ref{def:BS} may be tweaked to assign to the servers in $\A_i$ the class-$i$ job in the helper set with the highest remaining size or with the highest \emph{expected} remaining size, as this would reduce the workload on the helper set the most.
We do not consider these size-aware variants because we want to keep the overall structure as simple as possible.

The following remark summarizes the minimal requirements needed by $\BS$.
\begin{remark}
$\BS$ is nonpreemptive and size-oblivious.
The scheduler only needs to know the class of each arriving job and, to build the $\A$ system,  the overall number of servers together with the relative demands $\varrho_i$, which can be estimated a priori by user-profiling techniques.
In particular, it does not need to know the arrival rate of jobs.
\end{remark}

%
%
%

\subsection{Problem statement}
\label{sec:performance_measures}

Given a scheduling policy $\sd$, $R_{\sd}$ denotes the mean response time (or delay) induced by~$\sd$, i.e., the mean time spent in the system by a job.
In this paper, we are mainly interested in analyzing the mean response time induced by BS-$\pi$.
This comes with the technical difficulty of choosing the auxiliary policy $\pi$, which in principle is ``arbitrary'' and would require a separate analysis; for $\pi$ generic, analytical formulas for $R_{\pi}$ are not available.
To overcome this difficulty,
it is then convenient to introduce the steady-state probability that a job, upon arrival, needs to use the servers in the helper set.
Let $P_{\h}$ denote such probability.
If $P_{\h}$ is ``small'', then the contribution of $\pi$ on the mean response time induced by $\BS$ is ``small''.
This motivates us to investigate the following problem.

\begin{problem}
\label{problem:1}
We are interested in
understanding under which  conditions
it is possible to drive $P_{\h}$ to zero.
\end{problem}

\section{Main results}
\label{sec:main}

This section presents our main results about the performance induced by the proposed policy $\BS$.
\review{The key observation, which will be used in our proofs, is that
a class-$i$ job is processed either by servers in~$\A_{i}$ or by servers in~$\h$, disjointedly.
This holds true because the number of servers dedicated to class-$i$ jobs is a multiple of $n_i$, by construction of the set of servers $\A_{i}$.}
%
This property will be exploited to show that
the mean number of busy servers in the $\A_i$ system is lower bounded
by the mean number of busy servers of an independent M/GI/$s$/$s$ queue; see Section~\ref{sec:connection_MGss}.
In light of this connection, we will present our results using standard performance indicators that are relevant to this queue, for which we recall some of its main properties before stating our results.

\subsection{Preliminaries}

The M/GI/$s$/$s$ queue is a crucial component of queueing theory and the investigation of the \emph{blocking probability}, i.e., the probability that an incoming job finds all servers busy, is undoubtedly one of the most studied subjects in the literature.
Assuming that the arrival process is Poisson with rate $\lambda$ and that the mean service time of each server is $d$, it is well known that the blocking probability, which we denote by $E_s(\lambda\, d)$, is given by the Erlang's loss formula:
\begin{align}
\label{eq:ErlangB}
E_s(\lambda\, d) = \frac{(\lambda\, d)^s}{s!}  \left(\sum_{\ell=0}^s\frac{(\lambda\, d)^\ell}{\ell!}\right)^{-1}.
\end{align}
This is connected to the mean response time, denoted by $R_s$, through the relation
\begin{align}
\label{eq:ErlangB_R}
R_s
= d \,(1-E_s(\lambda\, d)).
\end{align}
Also, we recall the following classical result due to Erlang~\cite{Zwart08Erlang}.
\begin{lemma}
\label{Erlang_lemma}
Suppose that the arrival rate $\lambda$ and the number of servers $s$ in an M/GI/$s$/$s$ queue grow to infinity while the load (or utilization) $\rho=\lambda d/s $ approaches one and
$(1-\rho)\sqrt{s}\to\theta$
where $\theta$ is a ﬁxed constant. Then,
$$
\lim_{s\to\infty}
\sqrt{s} E_{s}(\lambda d) = \frac{\phi(\theta)}{\Phi(\theta)},
$$
where $\Phi(x)$ and $\phi(x)$ denote the standard normal cumulative distribution function (CDF) and density, respectively.
\end{lemma}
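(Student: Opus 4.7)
My starting point would be the Poisson representation of the Erlang loss formula. Writing $a := \lambda d$ and letting $N \sim \mathrm{Poisson}(a)$, a direct inspection of \eqref{eq:ErlangB} yields
\begin{equation*}
E_s(a) \;=\; \frac{a^s/s!}{\sum_{\ell=0}^s a^\ell/\ell!} \;=\; \frac{\pp(N=s)}{\pp(N\le s)},
\end{equation*}
so that $\sqrt{s}\,E_s(a) = \sqrt{s}\,\pp(N=s)/\pp(N\le s)$. This reduces the problem to evaluating two asymptotics: a Gaussian tail limit for the denominator and a local central limit limit for the numerator.

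For the denominator, I would rewrite $\pp(N\le s) = \pp\!\left(\frac{N-a}{\sqrt{a}} \le \frac{s-a}{\sqrt{a}}\right)$ and observe that under the hypothesis $(1-\rho)\sqrt{s}\to\theta$ with $\rho=a/s$, we have
\begin{equation*}
\frac{s-a}{\sqrt{a}} \;=\; \frac{s(1-\rho)}{\sqrt{\rho s}} \;=\; \frac{(1-\rho)\sqrt{s}}{\sqrt{\rho}} \;\longrightarrow\; \theta,
\end{equation*}
since $\rho\to 1$. Because $a\to\infty$, the classical CLT for Poisson variables (e.g., via the convergence of characteristic functions) gives $\pp(N\le s)\to \Phi(\theta)$. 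For the numerator, I would apply the local CLT for the Poisson distribution, which states that $\sup_{n}\bigl|\sqrt{a}\,\pp(N=n) - \phi((n-a)/\sqrt{a})\bigr|\to 0$ as $a\to\infty$; evaluating at $n=s$ and using $(s-a)/\sqrt{a}\to\theta$ together with continuity of $\phi$, this yields $\sqrt{a}\,\pp(N=s)\to \phi(\theta)$. Finally, since $\sqrt{s}/\sqrt{a} = 1/\sqrt{\rho}\to 1$, we conclude $\sqrt{s}\,\pp(N=s)\to \phi(\theta)$.

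Combining both limits, $\sqrt{s}\,E_s(a)\to \phi(\theta)/\Phi(\theta)$, which is the claim.

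\textbf{Main obstacle.} Everything is routine \emph{given} the local CLT for the Poisson distribution at the lattice point $n=s$; the one delicate point is to ensure uniformity of the local CLT so that evaluating it at the moving point $n=s$ (with $s$ itself tending to infinity jointly with $a$) is legitimate. This is standard (Stirling's formula applied to $\pp(N=s)=e^{-a}a^s/s!$ together with the asymptotic $\log(1+x) = x - x^2/2 + O(x^3)$ for $x=(s-a)/a = O(1/\sqrt{s})$ gives the local expansion directly), so the lemma can alternatively be obtained by a bare-hands Stirling calculation avoiding any external invocation. I would in fact prefer this route in the write-up, as it keeps the argument self-contained and transparent: expand $\log \pp(N=s)$ using Stirling, identify the Gaussian density $\phi(\theta)$ as the leading term, and pair it with the CLT limit for the denominator.
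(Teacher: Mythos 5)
Your proof is correct, but there is nothing in the paper to compare it against: the authors do not prove Lemma~\ref{Erlang_lemma}, they recall it as a classical result (attributed to Erlang, with a pointer to the Janssen--van Leeuwaarden--Zwart paper on Gaussian expansions for the Erlang B formula) and use it as a black box in the proof of Theorem~\ref{th:critical}. Your derivation is essentially the standard one from that literature: the representation $E_s(a)=\pp(N=s)/\pp(N\le s)$ with $N\sim\mathrm{Poisson}(a)$, $a=\lambda d$, is exactly the usual starting point, and the two limits you isolate are the right ones. The details check out: $(s-a)/\sqrt{a}=(1-\rho)\sqrt{s}/\sqrt{\rho}\to\theta$, the CLT at the moving threshold is legitimate because the limit CDF $\Phi$ is continuous (so convergence in distribution upgrades to uniform convergence of CDFs), and your preferred Stirling route for the numerator is sound: $\log\bigl(e^{-a}a^s/s!\bigr)=-\tfrac12\log(2\pi s)+s(1-\rho)+s\log\rho+o(1)$ and $s(1-\rho)+s\log\rho=-\tfrac12 s(1-\rho)^2+O\bigl(s(1-\rho)^3\bigr)\to-\theta^2/2$, giving $\sqrt{s}\,\pp(N=s)\to\phi(\theta)$ without needing a uniform local CLT as an external input. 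The only cosmetic remark is that the argument works for any fixed real $\theta$, not just $\theta>0$ as used downstream in the paper, so your write-up is if anything slightly more general than what is needed.
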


The scaling in Lemma~\ref{Erlang_lemma} satisfies the conditions that define the Halfin-Whitt limiting regime~\cite{HW81}.

%

\subsection{Stability}

We say that the scheduling policy $\sd$ is \emph{throughput optimal} if $R_{\sd} <\infty$
whenever the load $\rho$ is less than one.

Our first result provides a sufficient condition for stability; see Section~\ref{sec:proofs} for a proof.
\begin{proposition}
\label{pr1}
Assume that $\pi$ is throughput optimal.
If
\begin{align}
\label{eq:cond_TO_BS}
\frac{\lambda}{|\h|} \sum_{i=1}^C \varrho_i  E_{s_i}(\lambda \alpha_i d_i) <1,
\end{align}
then $\RBS<\infty$.
\end{proposition}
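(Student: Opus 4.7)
My plan is to establish the stability assertion by comparison with an auxiliary policy $\MBS$ that coincides with $\BS$ except that the pull-back step~(iii) of Definition~\ref{def:BS} is disabled: every class-$i$ job routed to $\h$ on arrival remains in~$\h$ until $\pi$ serves it to completion. The key first observation is that under $\MBS$, because every class-$i$ job requires exactly $n_i$ servers and $a_i$ is a positive multiple of $n_i$ by construction, the subsystem $\A_i$ decouples from $\h$ and from the other $\A_j$ and behaves as an M/GI/$s_i$/$s_i$ loss queue with $s_i := a_i/n_i$ independent ``slots,'' Poisson arrivals at rate $\lambda\alpha_i$, and i.i.d.\ service times distributed as $D_i$. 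By Erlang's insensitivity together with PASTA, the stationary fraction of class-$i$ arrivals diverted to $\h$ under $\MBS$ is exactly $E_{s_i}(\lambda\alpha_i d_i)$.

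Since a diverted class-$i$ job carries workload $n_i d_i$ in expectation, the long-run workload arrival rate at $\h$ under $\MBS$ equals
\begin{equation*}
\sum_{i=1}^C \lambda\alpha_i E_{s_i}(\lambda\alpha_i d_i)\cdot n_i d_i \;=\; \lambda\sum_{i=1}^C \varrho_i\, E_{s_i}(\lambda\alpha_i d_i),
\end{equation*}
which by hypothesis~\eqref{eq:cond_TO_BS} is strictly less than $|\h|$. Since $\pi$ is throughput optimal on $|\h|$ servers and the overflow stream has mean rate below capacity, the helper subsystem is positive recurrent under $\MBS$, and combined with the (trivially stable) loss queues at each $\A_i$ this yields $R_{\MBS}<\infty$.

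The remaining task is to transfer stability from $\MBS$ to $\BS$. I would construct both systems on a common probability space driven by identical arrival epochs, class labels, and service-time realizations, and argue by sample-path monotonicity that at every time $t$ the number of busy slots in $\A_i$ under $\BS$ is at least the corresponding number under $\MBS$: whenever $\MBS$ would leave an $\A_i$-slot idle while a class-$i$ job waits in~$\h$, the rule~(iii) of $\BS$ immediately assigns that waiting job to the slot. By a class-wise workload-conservation balance (arrival rate of class-$i$ work equals work processed in $\A_i$ plus work processed in $\h$), this slot-occupancy dominance translates into a pathwise dominance of the helper workload in the reverse direction, so stability of $\MBS$ implies stability of $\BS$ and hence $\RBS<\infty$. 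The main obstacle is making this dominance rigorous, because a naive job-wise coupling fails: the same $\A_i$-slot may be occupied by different jobs under the two policies, so service-time records do not align. I would circumvent this by working at the level of aggregate workload rather than individual jobs, combining the comparison above with a Foster--Lyapunov drift argument on the total helper workload, whose negative drift at large values is furnished directly by the workload arrival bound derived in step two.
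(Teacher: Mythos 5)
Your proposal is correct and follows essentially the same route as the paper: introduce the modified policy $\MBS$ without the pull-back rule, observe that each $\A_i$ then decouples into an M/GI/$s_i$/$s_i$ loss queue so that the overflow load on $\h$ equals $\frac{\lambda}{|\h|}\sum_i \varrho_i E_{s_i}(\lambda\alpha_i d_i)<1$, and conclude via throughput optimality of $\pi$ together with a monotone comparison between $\BS$ and $\MBS$. The only difference is that the paper dispatches the comparison step with a brief rate-domination argument (its Proposition~\ref{prop:asi9c8s}), whereas you flag the coupling subtlety explicitly and propose a workload-level drift argument — a more cautious treatment of the same idea.
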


It is not clear whether $\BS$ is throughput optimal (when~$\pi$ is throughput optimal on the helper set).
The problem is that service capacity can be lost as follows.
Suppose that $(x_1,\ldots,x_I,h_1,\ldots,h_C)$ is a ``state'' of the system where~$x_i$ resp.~$h_i$ denotes the number of class-$i$ jobs in $\A_i$ resp. in $\h$. For simplicity, suppose also that there are only two classes, i.e., $C=2$, and that $n_1<n_2$.
%
Then, a loss of capacity occurs in state $(\frac{a_1}{n_1},0,h_1,0)$ with $h_1>\lfloor\frac{|\h|}{n_1}\rfloor$, as one class-1 job that is waiting for service in the helper set may be processed by the servers in~$\A_2$.
This does not occur due to the static server set partitioning of the proposed approach,
which forces jobs of different classes not to interfere with each other.
%
%

In the following, we show that \eqref{eq:cond_TO_BS} is eventually true if the number of servers is large enough. In other words, this means that $\BS$ is throughput optimal \emph{asymptotically}.

\subsection{Subcritically loaded many-server regime}

Now, we consider a limiting regime where the demand $\lambda \varrho$ and the overall number of servers $k$ grow to infinity in proportion while the load $\rho$ is kept fixed to a constant less than one.
In the multiserver-job model, there are several ways to accomplish this as both the arrival rate $\lambda$ and needs~$n_i$, for all $i$, can grow with $k$.
To identify a proper scaling of the system parameters, we consider a sequence of models indexed by $k$ and add the superindex $(k)$ to refer to the $k$-th system.

First, we introduce the sequence $(f_k)_k$ such that
\begin{align}
\label{eq:cond_f_k}
f_k =o(k),\quad f_k\in\mathbb{N}.
\end{align}
Then, we let $k\to\infty$ with the following scaling:
\begin{subequations}
\label{eq:LSL_cond}
\begin{align}
\lambda^{(k)} & =\lambda \frac{k}{f_k}\\
n_i^{(k)} & =n_i\, f_k\\
\alpha_i^{(k)} & =\alpha_i, \quad D_i^{(k)} =D_i,
\end{align}
\end{subequations}
which ensures that the network load $\rho^{(k)}$ remains constant to $\rho=\lambda\varrho<1$.
This subcritical regime has been widely considered in the queueing literature; e.g.,~\cite{Kelly91Loss}.

Within this regime, the next result shows that the probability that
an arriving job is routed to the helper set, $P_{\h}^{(k)}$, converges to zero; see Section~\ref{sec:proofs} for a proof.
\begin{theorem}
\label{th:subcritical}
Let \eqref{eq:cond_f_k} and \eqref{eq:LSL_cond} hold.
As $k\to\infty$,
$P_{\h}^{(k)}\to 0$ and $\RBS^{(k)}\to \sum_i \alpha_i d_i$.
\end{theorem}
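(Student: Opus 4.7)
The plan is to decompose the $\A$ subsystem into $C$ independent per-class subqueues, to stochastically dominate each of them by a classical Erlang model, and then to apply known many-server asymptotics. Because $a_i^{(k)}$ is a multiple of $n_i^{(k)}$ by the construction in \eqref{eq:A_and_B}, each $\A_i$ acts as a pool of $s_i^{(k)} := a_i^{(k)}/n_i^{(k)}$ ``super-servers'', each handling one class-$i$ job. Upon arrival, a class-$i$ job enters $\A_i$ if a super-server is free, otherwise it is routed to $\h$; whenever an $\A_i$ super-server frees up, the oldest class-$i$ job waiting in $\h$ (if any) is promoted to fill it.

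I would then construct a sample-path coupling with an auxiliary M/GI/$s_i^{(k)}$ queue (infinite waiting room, FCFS) driven by the same class-$i$ Poisson arrivals and identical i.i.d.\ service-time samples. The coupling should show that $N_i^{\text{BS}}(t)$, the total number of class-$i$ jobs in the BS-$\pi$ system, is stochastically dominated by the corresponding quantity $N_i^{\text{MG}}(t)$ in the auxiliary queue. The intuition is that helpers in BS-$\pi$ actually process jobs and thus drain the class-$i$ population, whereas the classical M/GI/$s_i$ waiting room merely stores them; meanwhile the BS-$\pi$ promotion rule exactly mimics FCFS from the waiting room into $\A_i$. Since an arriving class-$i$ job is routed to $\h$ precisely when $N_i^{\text{BS}}(t) \ge s_i^{(k)}$, PASTA and the coupling give
\[
P_{\h,i}^{(k)} \le P\bigl(N_i^{\text{MG}} \ge s_i^{(k)}\bigr).
\]

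Next I would verify that under \eqref{eq:LSL_cond} the condition $f_k = o(k)$ forces $|\h|^{(k)}/\max_i n_i^{(k)} \to \infty$ and hence $\psi^{(k)} \to 1$; therefore $s_i^{(k)} \sim k\alpha_i d_i/(f_k \varrho) \to \infty$, while the per-class load of the auxiliary queue satisfies $\lambda^{(k)}\alpha_i d_i/s_i^{(k)} \to \rho/\psi^{(k)} \to \rho < 1$. Classical subcritical many-server asymptotics for M/GI/$s$ queues (see, e.g., the Erlang~B asymptotics in \cite{Zwart08Erlang}, which also control the delay probability) then give $P(N_i^{\text{MG}} \ge s_i^{(k)}) \to 0$ exponentially fast in $s_i^{(k)}$. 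Consequently $P_{\h,i}^{(k)} \to 0$ for every $i$, hence $P_{\h}^{(k)} = \sum_i \alpha_i P_{\h,i}^{(k)} \to 0$. For the mean response time I would decompose $\RBS^{(k)} = (1-P_{\h}^{(k)})\bar{R}_{\A}^{(k)} + P_{\h}^{(k)}\bar{R}_{\h}^{(k)}$. A class-$i$ job admitted directly to $\A_i$ experiences exactly its service time, so $\bar{R}_{\A}^{(k)} = \frac{\sum_i \alpha_i(1-P_{\h,i}^{(k)}) d_i}{1-P_{\h}^{(k)}} \to \sum_i \alpha_i d_i$; meanwhile a companion coupling of $\A_i$ with an independent M/GI/$s_i^{(k)}$/$s_i^{(k)}$ loss queue, combined with workload conservation $\lambda^{(k)}\varrho_i^{(k)} = n_i^{(k)} E[X_i^{\text{BS}}] + W_{\h,i}^{(k)}$, yields $W_{\h,i}^{(k)} \le \lambda^{(k)}\varrho_i^{(k)}\, E_{s_i^{(k)}}(\lambda^{(k)}\alpha_i d_i) \to 0$. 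Thus the helpers become asymptotically under-loaded and, since $\pi$ is throughput-optimal, $\bar{R}_{\h}^{(k)}$ stays bounded. Combined with $P_{\h}^{(k)} \to 0$, this gives $\RBS^{(k)} \to \sum_i \alpha_i d_i$.

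The main obstacle is the formal sample-path coupling $N_i^{\text{BS}} \le_{\text{st}} N_i^{\text{MG}}$, and the analogous coupling underlying the helper-workload bound with the loss queue. The two systems have different departure mechanisms---actual helper service versus waiting-room turnover---so a naive pathwise domination fails; instead, one typically augments the state with residual service times and argues by induction over event times, matching the ``extra'' completions in BS-$\pi$ with suppressed dwelling times in the M/GI/$s$ queue. The generality of the service-time distribution $D_i$ is what makes this step non-trivial, even though the intuition behind each inequality is straightforward.
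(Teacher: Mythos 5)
Your overall strategy---reduce each $\A_i$ to a single-class Erlang-type system and invoke many-server asymptotics---is the right one, but the comparison system you chose creates two genuine gaps that the paper's proof avoids. First, the dominance $N_i^{\mathrm{BS}} \le_{\mathrm{st}} N_i^{\mathrm{MG}}$ against an M/GI/$s_i$ \emph{delay} queue is exactly the step you defer, and with general $D_i$ it is not a routine induction: departures in the two systems occur in different orders (helper completions versus promotions into $\A_i$), so the residual-service profiles desynchronize and no memoryless argument is available. You correctly flag this as the main obstacle, but without it the proof does not close. Second, even granting the coupling, the quantity you need to drive to zero, $P(N_i^{\mathrm{MG}} \ge s_i^{(k)})$ for an M/GI/$s$ queue with a general service-time distribution, is not a ``classical'' citable asymptotic: the M/GI/$s$ delay queue is not insensitive to the law of $D_i$ beyond its mean, its stationary distribution has no closed form, and the reference you invoke (\cite{Zwart08Erlang}) concerns the Erlang~B \emph{loss} formula, not the delay probability of M/GI/$s$. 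Results of the kind you need in the quality-driven regime exist (e.g.\ \cite{Zwart}), but they are technical and cannot simply be waved at for general $D_i$.

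The paper sidesteps both issues with one move: it introduces a modified policy $\MBS$ in which a job blocked at $\A_i$ is \emph{irrevocably} sent to $\h$ and never promoted back. Each $\A_i$ then becomes, autonomously and exactly, an M/GI/$s_i$/$s_i$ \emph{loss} queue (Property~\ref{property3}), whose blocking probability is Erlang's formula $E_{s_i}(\lambda\alpha_i d_i)$---crucially insensitive to the distribution of $D_i$---and vanishes in the subcritical regime by elementary estimates. The comparison $P_{\h} \le P_{\h}^{\MBS}$ (Corollary~\ref{cor:asi9c8s}) then only requires the soft monotonicity that $\BS$ admits class-$i$ jobs into $\A_i$ at least as often as $\MBS$ does, rather than a full sample-path domination against a delay queue. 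If you want to salvage your route, replace your comparison object by this loss queue; your treatment of $\psi^{(k)}\to 1$ and the response-time decomposition is otherwise sound, though for the term $P_{\h}^{(k)}\bar R_{\h}^{(k)}$ you would still need to argue that the helper-set response time does not grow faster than $1/P_{\h}^{(k)}$ (a point the paper itself treats lightly). Also note that your statement that a class-$i$ arrival is routed to $\h$ ``precisely when $N_i^{\mathrm{BS}} \ge s_i^{(k)}$'' is not exact---a class-$i$ job may be in service at the helpers while $\A_i$ has idle super-servers---but the inequality you need holds in the correct direction.
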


\subsection{Critically loaded many-server regime}

Proceeding in a similar way as above,
we now consider a limiting regime where the demand $\lambda \varrho$ and the overall number of servers $k$ grow to infinity in proportion but at the same time, the load $\rho$ approaches one from below.
Here, let the sequence $(f_k)_k$ satisfy again the conditions in \eqref{eq:cond_f_k}
and consider the following scaling:
\begin{subequations}
\label{eq:HW_cond}
\begin{align}
\lambda^{(k)} & \to \infty\\
\left(1-\rho^{(k)} \right) \sqrt{\frac{k}{f_k}} &\to \theta,  \quad \theta>0 \\
n_i^{(k)} & =n_i\, f_k\\
\alpha_i^{(k)} & =\alpha_i, \quad D_i^{(k)} =D_i.
\end{align}
\end{subequations}
This scaling satisfies the conditions that define the Halfin-Whitt regime \cite{HW81}.
Within this scaling and since $f_k=o(k)$, the following result implies that the probability that a job is routed to the helper set $P_{\h}$ vanishes as $k\to\infty$, and it also provides a convergence speed; see Section~\ref{sec:proofs} for a proof, which is based on Lemma~\ref{Erlang_lemma}.
\begin{theorem}
\label{th:critical}
Let~\eqref{eq:cond_f_k} and~\eqref{eq:HW_cond} hold.
As $k\to\infty$,
\begin{align}
\label{th:P_h}
\lim_{k\to\infty} \sqrt{\frac{k}{f_k}}\,P_{\h}^{(k)}
& \le \theta  \sum_{i=1}^C \frac{\alpha_i}{\theta_i} \frac{\phi(\theta_i)}{\Phi(\theta_i)},\quad
\theta_i := \theta  \sqrt{\frac{\varrho_{i}}{n_i\varrho}}
\end{align}
where $\Phi(x)$ and $\phi(x)$ denote the standard normal cumulative distribution function (CDF) and density, respectively.
\end{theorem}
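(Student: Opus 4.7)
The plan is to combine the $\A$-system / M/GI/$s$/$s$ decoupling sketched in Section~\ref{sec:connection_MGss} with the Halfin--Whitt asymptotic for Erlang's formula (Lemma~\ref{Erlang_lemma}). First, I would appeal to the stochastic comparison between $\BS$ and its modified variant $\MBS$: under $\MBS$ the $\A$ system decouples into $C$ independent loss queues, one per class, where class~$i$ sees arrivals at rate $\lambda^{(k)} \alpha_i$, mean service $d_i$, and $s_i^{(k)} = a_i^{(k)}/n_i^{(k)}$ service slots in $\A_i^{(k)}$. Because $\BS$ recalls jobs from $\h$ whenever a slot of $\A_i$ opens, the throughput of $\A_i$ under $\BS$ dominates that of the corresponding M/GI/$s_i^{(k)}$/$s_i^{(k)}$ queue, so by flow conservation the rate of jobs actually served in $\h$ is no larger than the overall Erlang-B loss rate, giving
$$
P_{\h}^{(k)} \;\le\; \sum_{i=1}^C \alpha_i\, E_{s_i^{(k)}}\bigl(\lambda^{(k)} \alpha_i d_i\bigr).
$$

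Second, I would verify that each per-class queue lies in the Halfin--Whitt regime. A short computation using $\varrho_i^{(k)}/\varrho^{(k)} = \varrho_i/\varrho$ and $\lambda^{(k)} = \rho^{(k)} k/(f_k \varrho)$ yields
$$
s_i^{(k)} \;\sim\; \frac{k\,\varrho_i}{n_i f_k\,\varrho}, \qquad \rho_i^{(k)} \;:=\; \frac{\lambda^{(k)}\alpha_i d_i}{s_i^{(k)}} \;\sim\; \frac{\rho^{(k)}}{\psi^{(k)}}.
$$
The hypothesis $f_k = o(k)$ forces $s_i^{(k)} \to \infty$, while the definition of $\psi^{(k)}$ in~\eqref{eq:A_and_B} together with $n_i^{(k)} = n_i f_k$ gives $1 - \psi^{(k)} = O(f_k/k) = o(\sqrt{f_k/k})$. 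Consequently $\rho_i^{(k)} \to 1$ and
$$
(1-\rho_i^{(k)})\sqrt{s_i^{(k)}} \;\longrightarrow\; \theta\sqrt{\frac{\varrho_i}{n_i\varrho}} \;=\; \theta_i,
$$
placing queue $i$ in the regime of Lemma~\ref{Erlang_lemma} with critical parameter $\theta_i$.

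Third, Lemma~\ref{Erlang_lemma} yields $\sqrt{s_i^{(k)}}\,E_{s_i^{(k)}}(\lambda^{(k)}\alpha_i d_i) \to \phi(\theta_i)/\Phi(\theta_i)$ for every $i$. Multiplying the Erlang-B bound on $P_{\h}^{(k)}$ by $\sqrt{k/f_k}$, factoring as $\sqrt{k/f_k} = \sqrt{s_i^{(k)}}\cdot\sqrt{k/(f_k s_i^{(k)})}$, and noting that the second factor converges to $\sqrt{n_i\varrho/\varrho_i} = \theta/\theta_i$, a termwise passage to the limit in the finite sum delivers
$$
\limsup_{k\to\infty}\sqrt{\frac{k}{f_k}}\,P_{\h}^{(k)} \;\le\; \theta\sum_{i=1}^C \frac{\alpha_i}{\theta_i}\,\frac{\phi(\theta_i)}{\Phi(\theta_i)},
$$
which is the claimed bound.

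The main obstacle is the first step, namely the rigorous coupling underlying the Erlang-B bound. Because $\BS$ allows jobs to be recalled from $\h$ back to $\A_i$, the $\A_i$ subsystem under $\BS$ is strictly busier than the standalone M/GI/$s_i^{(k)}$/$s_i^{(k)}$ queue, so the correct comparison must be made at the level of helper-served throughput rather than at the level of ``$\A_i$ is full'' probabilities; this hinges on a pathwise construction that exploits the nonpreemptive, size-oblivious nature of $\pi$ acting only on $\h$. Once that coupling is in place, everything else reduces to the deterministic asymptotic calculation built on Lemma~\ref{Erlang_lemma} sketched above.
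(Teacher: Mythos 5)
Your proposal follows essentially the same route as the paper: bound $P_{\h}^{(k)}$ by the loss probability of the decoupled M/GI/$s_i^{(k)}$/$s_i^{(k)}$ queues under the modified policy (the paper's Property~\ref{property3} and Corollary~\ref{cor:asi9c8s}), verify that each per-class queue enters the Halfin--Whitt regime with parameter $\theta_i$ (the paper's Proposition~\ref{prop_HW_cond}, which in fact shows $\psi^{(k)}=1$ exactly for all large $k$), and conclude termwise via Lemma~\ref{Erlang_lemma}. The coupling step you flag as the main obstacle is precisely the content of the paper's Proposition~\ref{prop:asi9c8s}, which argues it at the same level of detail you sketch.
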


\section{Proofs of our main results}
\label{sec:proofs}

\review{In this section, we provide proofs for our main results along with the intuition behind our strategy.
Roughly speaking, the general strategy for proving our results is as follows.
Since the dynamics underlying~$\BS$ are presumably intractable to investigate due to the complex interactions between the $\A$ and $\h$ systems,
we first define a \emph{modified} version of $\BS$ where the $\A$ and $\h$ systems are independent. Here, we show that the $\A$ system behaves exactly like a set of parallel M/GI/$s$/$s$ queues.
Then, we connect this modified version to $\BS$ by showing that it provides bounds on the performance metrics of interest.
Finally, we rely on these bounds to develop our asymptotic results.}

\subsection{Modified policy and connection with the M/GI/$s$/$s$ queue}
\label{sec:connection_MGss}

Let us define the following modified version of $\BS$, denoted by $\MBS$.

\begin{definition}
\label{def:MBS}
$\MBS$ is the BSF policy where
an arriving job of class $i$ is sent to $\A_{i}$ if $\A_{i}$ contains enough idle servers for immediate processing,
otherwise it is \emph{irrevocably} sent to the helper set, for all $i$.
The helper set processes jobs according to an auxiliary scheduling policy $\pi$ for multiserver jobs that operates independently of the state of the $\A$ system.
\end{definition}

Note that $\MBS$ is a scheduling policy that satisfies the BSF principle.
The main difference with respect to $\BS$ is that jobs that upon arrival are sent to the helper set will be certainly processed by the helpers, while within $\BS$ they may be reassigned to the $\A$ system.
For this reason, one expects that it performs worse than $\BS$.
From a mathematical point of view, the advantage of $\MBS$ over $\BS$ is that the dynamics of the $\A$ system are autonomous and therefore they can be analyzed independently of the $\h$ system -- this is not true within $\BS$ because arrivals to the $\A$ system depend on the state of the helper set.

\review{
At this point, the key observation for the dynamics induced by $\MBS$ is that
a class-$i$ job is processed either by servers in~$\A_{i}$ or by servers in~$\h$. In other words, it is not possible that a job is processed by servers in~$\h$, preempted and then processed by servers in~$\A_{i}$.
In addition, the decision whether a job is assigned to the $\A$ or $\h$ system is irrevocable and made upon job arrival.
In view of this, the system enjoys the following property.}

\begin{property}
\label{property3}
Within $\MBS$, the process of the number of servers occupied in each $\A_{i}$ subsystem divided by $n_i$ is the process of the number in an M/GI/$s_{i}$/$s_{i}$ queue
where
$s_{i}
= \lfloor \psi \frac{k \varrho_{i}}{n_i\varrho} \rfloor$,
the arrival rate is $\lambda \alpha_i$
and service times are distributed as~$D_i$.
\end{property}

Now, it remains to establish the connection between the performance induced by $\BS$ and $\MBS$. This is stated in the following proposition.
\begin{proposition}
\label{prop:asi9c8s}
Let $\lambda_{i}^{\BS}$ and $\lambda_{i}^{\MBS}$ denote the arrival rates of class-$i$ jobs at the $\A$ system induced by $\BS$ and $\MBS$, respectively.
Then,
\begin{equation}
\lambda_{i}^{\BS} \ge \lambda_{i}^{\MBS}.
\end{equation}
\end{proposition}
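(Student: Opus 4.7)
My strategy is to reduce the claim to a comparison of mean occupancies via Little's law, and then to establish that comparison through a natural sample-path coupling.

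\emph{Step 1 (Little's law).} Under either $\BS$ or $\MBS$, any class-$i$ job that ever enters $\A_i$ -- whether by direct admission or, in the case of $\BS$, by pullback from the helper queue -- is immediately assigned to a block of $n_i$ dedicated servers and is served without interruption until completion. Its sojourn time in $\A_i$ therefore equals its own service time, whose mean is $d_i$. Letting $N_i^{\pi}$ denote the steady-state number of class-$i$ jobs in $\A_i$ under policy $\pi$, Little's law applied to the $\A_i$ subsystem gives
\[
\E[N_i^{\pi}] \;=\; \lambda_i^{\pi}\, d_i \quad \text{for } \pi \in \{\BS,\MBS\},
\]
so it suffices to prove $\E[N_i^{\BS}] \ge \E[N_i^{\MBS}]$.

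\emph{Step 2 (Coupling).} I would couple the two systems so that they are driven by the \emph{same} Poisson arrival process with identical class labels. By Property~\ref{property3}, the $\A_i$ subsystem under $\MBS$ is an M/GI/$s_i$/$s_i$ loss queue with i.i.d.\ service samples. The cleanest way to pair up randomness on the $\BS$ side is a slot-indexed coupling: the $s_i$ slots of $\A_i$ share an i.i.d.\ family $(T^{(m)}_j)_{m\ge 1}\sim D_i$, with $T^{(m)}_j$ being the service time of the $m$-th occupant of slot $j$ in \emph{both} systems. Ties among free slots are broken by a deterministic rule (e.g.\ smallest index first).

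\emph{Step 3 (Pathwise dominance).} Starting both systems empty, I would show that $N_i^{\BS}(t)\ge N_i^{\MBS}(t)$ for all $t\ge 0$, almost surely, by induction over events. The crucial mechanism is that when a slot in $\A_i$ empties under $\BS$ and the helper queue holds a waiting class-$i$ job, the slot is instantaneously refilled by pullback, whereas under $\MBS$ the slot must wait for the next direct admissible arrival. Consequently the $m$-th refilling epoch of slot $j$ -- and hence, by equality of the service-time samples, the $m$-th emptying epoch -- occurs no later under $\BS$ than under $\MBS$. At every arrival epoch the ordering $N_i^{\BS}(t-)\ge N_i^{\MBS}(t-)$ is preserved by a short case analysis: if $N_i^{\MBS}(t-)<s_i$ both systems admit; otherwise the inductive bound forces $N_i^{\BS}(t-)=s_i$ so neither admits directly. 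Taking expectations in the pathwise inequality and combining with Step~1 yields $\lambda_i^{\BS}\ge \lambda_i^{\MBS}$.

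\emph{Main obstacle.} The real technical subtlety is that the jobs currently sitting in $\A_i$ under the two policies are typically distinct, so the naive ``same arrivals, same per-job service time'' coupling does not synchronize slot emptying events. The slot-indexed coupling is designed precisely to sidestep this, but the inductive comparison of refilling epochs still requires a careful case analysis that interleaves external arrivals, slot completions, and pullback-triggered instantaneous refills. Once this ordering of slot epochs is in place, the dominance $N_i^{\BS}(t)\ge N_i^{\MBS}(t)$ and the resulting inequality follow.
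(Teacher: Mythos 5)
Your route is genuinely different from the paper's, and it contains a concrete gap. The paper's proof is a two-line rate argument: by Property~\ref{property3}, $\lambda_i^{\MBS}=\lambda\alpha_i\left(1-E_{s_i}(\lambda\alpha_i d_i)\right)$ exactly, while $\lambda_i^{\BS}=\lambda\alpha_i(1-p_i)$ with $p_i$ the probability that a class-$i$ job is ultimately processed by the helpers under $\BS$; the paper then observes that $p_i$ can only be smaller than the Erlang blocking probability because $\BS$ may reclaim a blocked job from the helper queue. You instead go through Little's law plus a slot-indexed coupling, and the step on which everything rests --- the pathwise dominance $N_i^{\BS}(t)\ge N_i^{\MBS}(t)$ --- is \emph{false} under that coupling. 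Take $s_i=1$, class-$i$ arrivals at times $0$, $0.5$, $3$, and slot samples $T^{(1)}=1$, $T^{(2)}=10$, $T^{(3)}=0.1$ (with the helper set congested so that jobs sent there remain waiting). Under $\MBS$ the job arriving at $0.5$ is irrevocably lost to the helpers and the slot is busy on $[0,1)\cup[3,13)$. Under $\BS$ the job arriving at $0.5$ is pulled back at time $1$ as the second occupant (duration $10$), the job arriving at $3$ is blocked and pulled back at time $11$ as the third occupant (duration $0.1$), so the slot is busy on $[0,11.1)$. At $t=12$ you get $N_i^{\BS}(t)=0<1=N_i^{\MBS}(t)$. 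The root cause is that the slot-indexed coupling hands the long sample $T^{(2)}$ to occupancies that \emph{start} at different times in the two systems, so the earlier-starting $\BS$ occupancy can finish, and be followed by a short one, while the $\MBS$ slot is still mid-service. Your induction therefore breaks at $\BS$-departure epochs not followed by an immediate refill, and ``taking expectations in the pathwise inequality'' is taking expectations of a false statement.

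What survives is the other claim you assert, the ordering of \emph{filling} epochs: if the $m$-th admission to each slot occurs no later under $\BS$ than under $\MBS$, then the cumulative number of admissions to $\A_i$ by any time $t$ is at least as large under $\BS$, and since $\lambda_i^{\pi}$ is precisely the long-run admission rate, the proposition follows directly --- no Little's law and no occupancy dominance needed. (In the counterexample above, $\BS$ admits three jobs by time $13$ versus two for $\MBS$, even though the busy periods are not nested.) But even that claim requires care for $s_i>1$: once the two systems assign an arriving job to different slots because their sets of free slots differ, the per-slot occupancy counters, and hence the drawn service samples, decouple, and the comparison of refilling epochs is no longer a short case analysis. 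I would either rebuild the coupling around cumulative admission counts, or adopt the paper's observation that under $\BS$ a class-$i$ job reaches the helpers only if it is both blocked on arrival and never reclaimed, which is the mechanism that makes $p_i\le E_{s_i}(\lambda\alpha_i d_i)$.
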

\begin{proof}
Within $\MBS$, the arrival rate of class-$i$ jobs at the $\A$ system is $\lambda \alpha_i$ minus the rate of blocked jobs of the M/GI/$s_i$/$s_i$ queue defined in Property~\ref{property3}.
Therefore,
\begin{align}
\lambda_{i}^{\MBS} = (1-E_{s_i}(\lambda \alpha_i d_i)) \lambda \alpha_i.
\end{align}
For~$\lambda_{i}^{\BS}$, we notice that $E_{s_i}(\lambda \alpha_i d_i)$ is higher than or equal to the probability that an incoming job will be processed by the helper set as $\BS$ may redirect that job to $n_i$ servers in~$\A_i$.
\end{proof}

The following corollary of Proposition~\ref{prop:asi9c8s} is then straightforward.
\begin{corollary}
\label{cor:asi9c8s}
Let $P_{\h}^{\MBS}$ denote the
probability that a job, upon arrival and within $\MBS$, needs to use the servers in the helper set.
Then,
\begin{equation}
P_{\h} \le P_{\h}^{\MBS}.
\end{equation}
\end{corollary}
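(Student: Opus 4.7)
The plan is to reduce the claim to Proposition~\ref{prop:asi9c8s} via a simple flow-conservation argument combined with PASTA. Under either policy $\sd \in \{\BS, \MBS\}$, every arriving class-$i$ job is irrevocably routed either to $\A_i$ or to $\h$ at its arrival epoch (in the case of $\BS$, the later potential reassignment at a departure epoch concerns jobs that were already sitting in $\h$, so it does not affect the arrival-time routing decision that defines $P_{\h}$). Consequently, the long-run rate at which jobs are sent to the helpers equals $\lambda - \sum_{i=1}^C \lambda_{i}^{\sd}$, and since the exogenous arrival stream is Poisson with rate $\lambda$, PASTA yields
\[
P_{\h}^{\sd} \;=\; 1 - \frac{\sum_{i=1}^C \lambda_{i}^{\sd}}{\lambda}
\]
for $\sd \in \{\BS, \MBS\}$.

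Next, I would invoke Proposition~\ref{prop:asi9c8s} term by term: since $\lambda_{i}^{\BS} \ge \lambda_{i}^{\MBS}$ for every class $i$, summing over $i$ and dividing by $\lambda$ immediately gives
\[
1 - \frac{\sum_{i=1}^C \lambda_{i}^{\BS}}{\lambda} \;\le\; 1 - \frac{\sum_{i=1}^C \lambda_{i}^{\MBS}}{\lambda},
\]
which is exactly $P_{\h} \le P_{\h}^{\MBS}$.

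The only delicate step is ensuring that the steady-state probability $P_{\h}$ is well defined for $\BS$, which requires positive recurrence of the underlying Markov process; this is covered by the stability hypothesis implicitly carried through the statement (and formalized by Proposition~\ref{pr1} in the sufficient case). For $\MBS$, well-definedness is immediate because the $\A$ system decomposes into independent M/GI/$s_i$/$s_i$ queues by Property~\ref{property3}, and the helper set is handled by the throughput-optimal $\pi$. I expect no serious obstacle beyond this book-keeping: the real content lies in Proposition~\ref{prop:asi9c8s}, and the corollary is essentially its restatement at the level of probabilities rather than rates.
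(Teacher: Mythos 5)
Your skeleton---express $P_{\h}$ through the class-$i$ rates into the $\A$ system and invoke Proposition~\ref{prop:asi9c8s}---is exactly what the paper intends (it gives no explicit proof beyond calling the corollary straightforward). The problem is your parenthetical claim that under $\BS$ the routing is irrevocable at the arrival epoch and that the reassignment rule (the third rule of Definition~\ref{def:BS}) ``does not affect'' the quantity being bounded. That claim forces an arrival-instant reading of both $P_{\h}$ and $\lambda_i^{\BS}$, and under that reading your flow identity and the proposition cannot be used together. Let $r_i\ge 0$ denote the steady-state rate at which class-$i$ jobs are moved from $\h$ back to $\A_i$. If $\lambda_i^{\BS}$ counts only jobs routed to $\A_i$ at their arrival instant, Proposition~\ref{prop:asi9c8s} does not apply to it: the paper's proof of that proposition identifies $\lambda_i^{\BS}$ with the \emph{total} rate of class-$i$ jobs handled by $\A_i$, reassignments included, and the redirection is precisely why $\lambda_i^{\BS}\ge\lambda_i^{\MBS}$ holds. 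If instead $\lambda_i^{\BS}$ includes the reassigned jobs, then the rate of jobs sent to $\h$ at arrival epochs is $\lambda-\sum_i\lambda_i^{\BS}+\sum_i r_i$, not $\lambda-\sum_i\lambda_i^{\BS}$, and your identity for $P_{\h}$ is off by exactly the term $\sum_i r_i/\lambda$ you declared irrelevant.

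The gap is not cosmetic: with the strict arrival-instant definition of $P_{\h}$ the inequality actually points the other way. Under $\BS$ the subsystem $\A_i$ is kept at least as occupied as the M/GI/$s_i$/$s_i$ queue of Property~\ref{property3}, because blocked class-$i$ jobs waiting in $\h$ refill $\A_i$ at departure epochs; its occupancy therefore stochastically dominates that of the pure loss system, and by PASTA an arriving class-$i$ job finds $\A_i$ full with probability at least $E_{s_i}(\lambda\alpha_i d_i)$, which would give $P_{\h}\ge P_{\h}^{\MBS}$. The corollary is correct when $P_{\h}$ is read as the fraction of jobs ultimately \emph{processed} by the helpers: conservation of flow then gives $P_{\h}=1-\sum_i\lambda_i^{\BS}/\lambda$ with $\lambda_i^{\BS}$ the total (direct plus reassigned) rate into $\A_i$, and Proposition~\ref{prop:asi9c8s} finishes the argument as you indicate. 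You should drop the irrevocability claim and make this interpretation explicit; the reassignment rule is the load-bearing part of the proof, not a detail to be waved away.
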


\subsection{Proof of Proposition~\ref{pr1}}

In view of Property~\ref{property3},
the arrival rate of class-$i$ jobs at the $\h$ system induced by $\MBS$ corresponds to the rate of dropped jobs of the M/GI/$s_i$/$s_i$ queue in Property~\ref{property3}. Therefore, the load induced by $\MBS$ on the $\h$ system $\rho_{\h}^{\MBS}$ is given by
\begin{equation*}
\rho_{\h}^{\MBS} = \frac{\lambda}{|\h|} \sum_{i=1}^C \varrho_i  E_{s_i}(\lambda \alpha_i d_i).
\end{equation*}
Finally, Proposition~\ref{prop:asi9c8s} ensures that the load at the helper set induced by $\BS$ is smaller than or equal to $\rho_{\h}^{\MBS}$.


\subsection{Proof of Theorem~\ref{th:subcritical}}

For the M/GI/$s_{i}^{(k)}$/$s_{i}^{(k)}$ queue in Property~\ref{property3},
the offered load $\lambda \frac{k}{f_k} \alpha_i d_i$
and
the number of servers
$s_{i}^{(k)}= \lfloor \frac{k \varrho_{i}}{n_i f_k\varrho} \rfloor$
both grow to infinity with the same rate, as $k\to\infty$,
and their ratio converges to $\lambda \varrho$.
\review{Within this regime, it is known that the loss probability converges to zero, i.e.,}
$$
E_{s_i^{(k)}}\left(\lambda \frac{k}{f_k} \alpha_i d_i\right) \to 0
$$
provided that
$\lambda \varrho <1$, which is the case in our setting.
Now, for the probability that in the $k$-th system an arriving job is routed to the helper set, $P_{\h}^{(k)}$,
Property~\ref{property3} and Corollary~\ref{cor:asi9c8s} imply that
$P_{\h}^{(k)}\le P_{\h}^{(k),\MBS} = \sum_{i=1}^C \alpha_i E_{s_i^{(k)}}\left(\lambda\alpha_i d_i {k}/{f_k}\right)$.
Therefore, $P_{\h}^{(k)}\to 0$ and $\RBS \to \sum_i \alpha_i d_i$ follows by \eqref{eq:ErlangB_R}.

\subsection{Proof of Theorem~\ref{th:critical}}

Recall that the M/GI/$s_i^{(k)}$/$s_i^{(k)}$ queue in Property \ref{property3}
has arrival rate $\lambda^{(k)} \alpha_i$, service times distributed as $D_i$ and
$s_{i}^{(k)} = \lfloor \psi^{(k)} \frac{k \varrho_{i}}{n_i f_k\varrho} \rfloor$.
Thus, its load is
$$\rho_i^{(k)}:= {\lambda^{(k)} \alpha_i d_i/s_{i}^{(k)}}.$$
Within the scaling \eqref{eq:HW_cond}, the following proposition states that also each M/GI/$s_i$/$s_i$ queue in Property~\ref{property3} scales in the Halfin-Whitt regime.

\begin{proposition}
\label{prop_HW_cond}
Let~\eqref{eq:cond_f_k} and~\eqref{eq:HW_cond} hold.
As $k\to\infty$,
$s_{i}^{(k)} \to\infty$,
$\psi^{(k)}\to 1$ and
\begin{align}
\label{eq:rho_i_HW}
\left( 1 - \rho_{i}^{(k)} \right) \sqrt{s_{i}^{(k)}}
\to \theta_i.
\end{align}
\end{proposition}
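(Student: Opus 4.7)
My plan is to reduce everything to a single scaled quantity and verify the three claims in turn. Writing $K_k := k/f_k$ and $c_i := K_k\, \varrho_i / (n_i \varrho)$, the scaling \eqref{eq:HW_cond} gives $\varrho_i^{(k)} = \varrho_i f_k$, $\varrho^{(k)} = \varrho f_k$ and $n_i^{(k)} = n_i f_k$, which after substitution simplifies $s_i^{(k)}$ to $\lfloor \psi^{(k)} c_i \rfloor$. Using $\rho^{(k)} = \lambda^{(k)}\varrho f_k/k$ together with $\alpha_i d_i = \varrho_i/n_i$, one also finds $\lambda^{(k)}\alpha_i d_i = \rho^{(k)} c_i$, and hence $\rho_i^{(k)} = \rho^{(k)} c_i / \lfloor \psi^{(k)} c_i\rfloor$. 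Since $f_k = o(k)$ we have $K_k \to \infty$, and therefore $c_i \to \infty$; this divergence will be the backbone of the subsequent estimates.

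To show $\psi^{(k)} \to 1$, I would apply the inequality $xt - 1 < \lfloor xt \rfloor \le xt$ to each summand in the definition of $\psi^{(k)}$ so as to obtain
\begin{equation*}
k - \sum_{i=1}^C \lfloor x c_i \rfloor n_i^{(k)} \ge k(1-x)
\end{equation*}
for every $x \in [0,1]$. Consequently, the constraint $k - \sum_i \lfloor x c_i\rfloor n_i^{(k)} \ge \max_i n_i^{(k)}$ is satisfied whenever $k(1-x) \ge f_k \max_i n_i$, which yields $\psi^{(k)} \ge 1 - (\max_i n_i)/K_k$. Combined with $\psi^{(k)} \le 1$, this gives $\psi^{(k)} \to 1$ and, a fortiori, $s_i^{(k)} = \lfloor \psi^{(k)} c_i \rfloor \to \infty$.

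For the Halfin--Whitt scaling of $\rho_i^{(k)}$, I would write
\begin{equation*}
1 - \rho_i^{(k)} = \frac{(\psi^{(k)} - \rho^{(k)})\, c_i - \{\psi^{(k)} c_i\}}{\lfloor \psi^{(k)} c_i\rfloor},
\end{equation*}
where $\{\cdot\}$ denotes the fractional part, and split $\psi^{(k)} - \rho^{(k)} = (\psi^{(k)} - 1) + (1 - \rho^{(k)})$. The bound $|\psi^{(k)}-1| \le (\max_i n_i)/K_k$ together with $c_i = O(K_k)$ gives $(\psi^{(k)}-1)\,c_i = O(1)$, and the fractional term is trivially $O(1)$; meanwhile, \eqref{eq:HW_cond} yields $(1-\rho^{(k)})\, c_i = \theta\, (\varrho_i/(n_i\varrho))\,\sqrt{K_k}\,(1+o(1))$, which is the dominant contribution since $\sqrt{K_k}\to\infty$. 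Substituting $\lfloor \psi^{(k)} c_i\rfloor = c_i(1+o(1))$ in the denominator, multiplying by $\sqrt{s_i^{(k)}} = \sqrt{c_i}(1+o(1))$, and using $\sqrt{c_i/K_k}\to\sqrt{\varrho_i/(n_i\varrho)}$ then delivers the target limit $\theta_i$. The only subtle point I expect is verifying that the $(\psi^{(k)}-1)\,c_i$ contribution is genuinely bounded: this relies on the exact cancellation of $K_k$ between $1-\psi^{(k)}=O(1/K_k)$ and $c_i = O(K_k)$, which is precisely why the helper set was sized as in Section~\ref{sec:partitioning}.
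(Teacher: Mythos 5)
Your proof is correct, and it follows the same overall strategy as the paper (control $\psi^{(k)}$, then extract the limit of $(1-\rho_i^{(k)})\sqrt{s_i^{(k)}}$), but the two key steps are executed differently. For $\psi^{(k)}$, the paper argues that the left-hand side of its feasibility condition grows to infinity and concludes $\psi^{(k)}=1$ exactly for all $k$ large enough; you instead use $\lfloor x c_i\rfloor\le x c_i$ together with $\sum_i \varrho_i/\varrho=1$ to get the quantitative bound $\psi^{(k)}\ge 1-(\max_i n_i)f_k/k$, which only gives $\psi^{(k)}\to 1$ but with an explicit rate $1-\psi^{(k)}=O(f_k/k)$. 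This is arguably the more defensible route: at $x=1$ the left-hand side of the paper's condition equals $\sum_i\{c_i\}n_i$, which stays bounded rather than diverging, so the exact equality $\psi^{(k)}=1$ is not obviously justified, whereas your rate bound is all that the subsequent argument actually needs. For the limit itself, the paper sandwiches the ratio $(1-\rho_i^{(k)})/(1-\rho^{(k)})$ between $1$ and a lower bound that tends to $1$, while you write the exact identity $1-\rho_i^{(k)}=\bigl((\psi^{(k)}-\rho^{(k)})c_i-\{\psi^{(k)}c_i\}\bigr)/\lfloor\psi^{(k)}c_i\rfloor$ and compare orders of magnitude: the terms $(\psi^{(k)}-1)c_i$ and $\{\psi^{(k)}c_i\}$ are $O(1)$, while $(1-\rho^{(k)})c_i\sim\theta\sqrt{k/f_k}\,\varrho_i/(n_i\varrho)$ diverges and dominates. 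The cancellation you flag as the subtle point, between $1-\psi^{(k)}=O(f_k/k)$ and $c_i=O(k/f_k)$, is exactly the place where the paper instead invokes $\psi^{(k)}=1$; your version makes transparent why a $\psi^{(k)}$ that converges to $1$ at rate $O(f_k/k)$ (rather than equaling $1$) is sufficient, at the cost of a slightly longer computation.
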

\begin{proof}
Since  $f_k=o(k)$ and $\psi^{(k)}\le 1$,
$s_{i}^{(k)}
=\left \lfloor \psi^{(k)}\, \frac{k}{ n_if_k} \,\frac{\varrho_{i}}{\varrho}  \right\rfloor\to\infty$.
For $x\in [0,1]$,
$k-\sum_i \lfloor x\, \frac{k}{ n_i^{(k)}} \,\frac{\varrho_{i,j}}{\varrho} \rfloor n_i^{(k)} \ge \max_i n_i^{(k)}$ if and only if
\begin{align}
\label{ascsc}
\frac{k}{f_k}-\sum_{i=1}^C  \left\lfloor x\, \frac{k}{ n_i f_k} \,\frac{\varrho_{i,j}}{\varrho}  \right\rfloor n_i  \ge \max_i n_i.
\end{align}
The RHS is constant in $k$ while the LHS grows to infinity sublinearly in $k$.
So, \eqref{ascsc} holds true for all $k$ large enough.
Using the definition of $\psi$ in Section~\ref{sec:BS}, the floor function $\lfloor\cdot\rfloor$ gives
\begin{align}
\label{eq:phi_property}
\psi^{(k)} = 1,\quad \forall k \mbox{ large enough}.
\end{align}
It remains to prove \eqref{eq:rho_i_HW}.
We obtain
\begin{align*}
1\ge \lim_{k\to\infty} \frac{1-\rho_i^{(k)}}{1-\rho^{(k)}}
& = \lim_{k\to\infty} \frac{
1 -
\lambda^{(k)} \alpha_i d_i  \lfloor \psi^{(k)} \frac{k \varrho_{i}}{n_i f_k\varrho} \rfloor^{-1}
}
{
1-\rho^{(k)}
}\\
%
& \ge \lim_{k\to\infty}\left(1 -
    \frac{
        \lambda^{(k)} \alpha_i d_i
    }{
        \psi^{(k)} \frac{k \varrho_{i}}{n_i f_k\varrho}  -1
    }
\right) \frac{1
}
{
1-\rho^{(k)}
}\\
%
%
& = \lim_{k\to\infty}
    \frac{   \psi^{(k)}  - \rho^{(k)}} { 1-\rho^{(k)}} =1
\end{align*}
where the last equality follows by \eqref{eq:phi_property}.
Therefore,
\begin{align*}
\lim_{k\to\infty}
\left( 1 - \rho_{i}^{(k)} \right) \sqrt{s_{i}^{(k)}}
& = \lim_{k\to\infty}
\left( 1 - \rho^{(k)} \right) \sqrt{\left\lfloor \psi^{(k)} \frac{k \varrho_{i}}{n_i f_k\varrho} \right\rfloor} \\
& = \lim_{k\to\infty}
\left( 1 - \rho^{(k)} \right) \sqrt{\frac{k}{f_k}}
\sqrt{\frac{\varrho_{i}}{n_i\varrho}} \\
& =
\theta  \sqrt{\frac{\varrho_{i}}{n_i\varrho}} = \theta_i
\end{align*}
which proves \eqref{eq:rho_i_HW}.
\end{proof}

Since each
M/GI/$s_i^{(k)}$/$s_i^{(k)}$ queue
scales in the Halfin-Whitt regime, we can apply Lemma~\ref{Erlang_lemma} to our case to obtain that
\begin{align*}
\frac{\phi(\theta_i)}{\Phi(\theta_i)}
 & =\lim_{k\to\infty} \sqrt{s_i^{(k)}} E_{s_i^{(k)}} (\lambda^{(k)}\alpha_i d_i)\\
 & =\sqrt{\frac{\varrho_{i}}{n_i\varrho}}\lim_{k\to\infty} \sqrt{\frac{k}{f_k}} E_{s_i^{(k)}} (\lambda^{(k)}\alpha_i d_i). \end{align*}
In view of Property~\ref{property3},
the probability that an arriving job is routed to the helper set within $\MBS$ is given by
\begin{align}
\label{eq:P_h}
P_{\h}^{(k),\MBS} = \sum_{i=1}^C \alpha_i E_{s_i}(\lambda\alpha_i d_i).
\end{align}
 Substituting in~\eqref{eq:P_h}, we obtain
\begin{align*}
\lim_{k\to\infty} \sqrt{\frac{k}{f_k}}\,P_{\h}^{(k)}
& = \sum_{i=1}^C \alpha_i \lim_{k\to\infty} \sqrt{\frac{k}{f_k}} E_{s_i}(\lambda^{(k)}\alpha_i d_i)\\
& = \theta  \sum_{i=1}^C \frac{\alpha_i}{\theta_i} \frac{\phi(\theta_i)}{\Phi(\theta_i)}.
\end{align*}
Finally, the result follows by Corollary~\ref{cor:asi9c8s}.

\section{Empirical comparison}
\label{sec:numerical}

We present the results of a numerical study that compares the simulated mean response times induced by our policy BalancedSplitting ($\BS$, Definition~\ref{def:BS}) to that of other state-of-the-art scheduling policies discussed in Section~\ref{sec:related_work}.
\review{In our test bench, we test with respect to different traffic conditions and real data traces.
We consider scenarios where, roughly speaking, jobs are either large or small, with respectively low and high probability of arriving, as this pattern is ubiquitous in real applications; see, e.g., \cite[Section~2.1]{LB_size_testing}.
}

In our simulation setup, the auxiliary policy~$\pi$ used for $\BS$ is First-Come First-Served (FCFS), and the following consideration has been taken into account.
\begin{remark}
\label{rem:po2}
We have always chosen the overall number of servers and the server needs to be all powers of two because the ServerFilling-* policies require this assumption for throughput optimality. This choice has been made to better stress the quality of the proposed policies as in this case ServerFilling-SRPT is also optimal for minimizing the mean response time in heavy traffic~\cite{Grosof22SRPT}.
\end{remark}




\subsection{Heavy-traffic and subcritical limiting regime}

In Figure~\ref{fig:1}, we have presented simulation results related to the critically loaded limiting regime.
Here, we complement those results by considering the subcritical and heavy-traffic regimes.
In the latter, we recall that the load $\rho$ approaches one from below but the overall number of servers is constant.
The results are illustrated in Figure~\ref{fig:2},
where we have used the same job classes, service time distributions and server needs used for Figure~\ref{fig:1}.
Also, each simulation run is based on $10^6$ arrivals.
\begin{figure}
\centering
\makebox[1.03\columnwidth][c]{\hspace{0cm}\includegraphics[width=1.03\columnwidth]{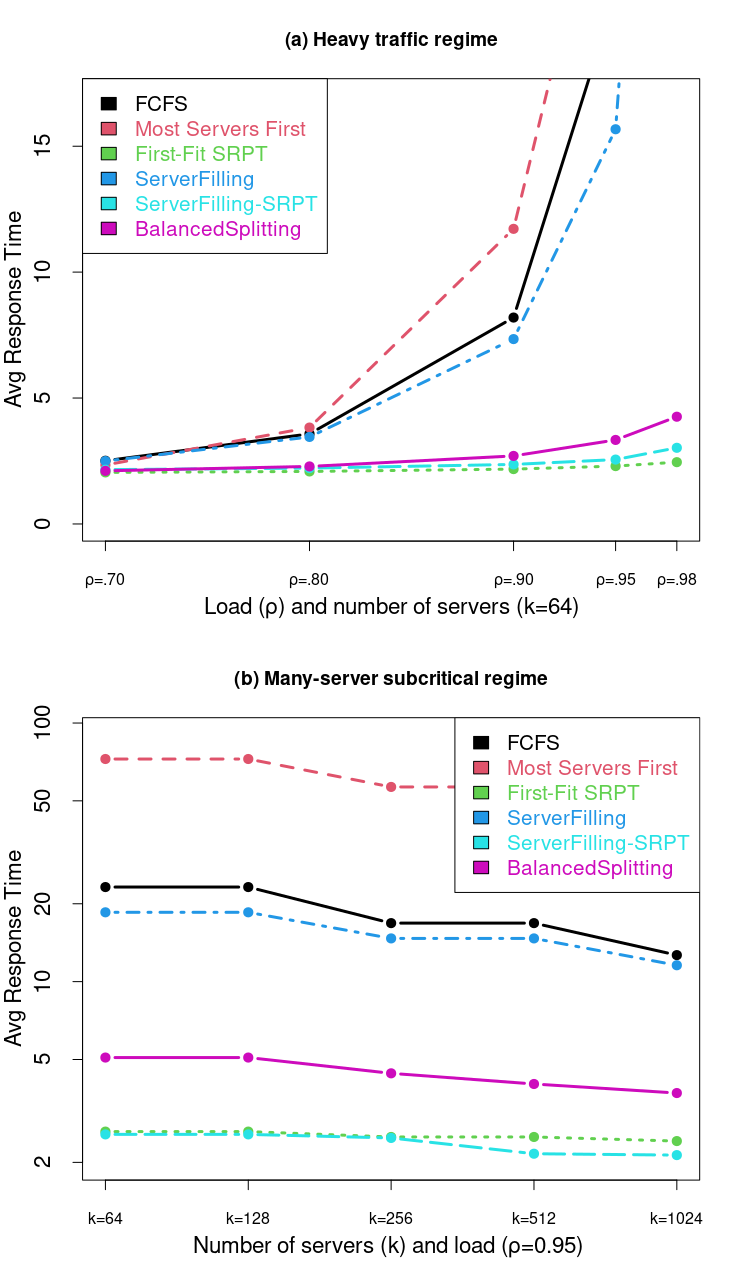}}%
%
\caption{
Simulated mean response time in the heavy-traffic (a) and subcritical (b) regimes.
Small resp. large jobs arrive with probability 0.95 resp. 0.05.
Server needs and mean service times are as in Figure~\ref{fig:1}.
Arrivals are Poisson.
}
%
%
%
%
%
%
%
%
%
%
%
%
\label{fig:2}
\end{figure}

In the heavy traffic scenario (Figure~\ref{fig:2}.a), $\BS$ performs almost optimally up until moderate loads $\rho\le 0.9$, that is up until the point where the helper set is slightly used. For higher loads, the helper set utilization becomes non-negligible and the intrinsic inefficiencies of FCFS, which is not throughput-optimal, start deteriorating performance, though the resulting response times are significantly smaller than the ones induced by FCFS, Most-Servers-First and ServerFilling.

In the subcritical scenario (Figure~\ref{fig:2}.b), $\BS$ provides again excellent performance, with average response times that are close to those achieved by ServerFilling-SRPT and First-Fit SRPT, which are preemptive and size-aware.

\subsection{Real HPC workloads}


We now assess the performance of the proposed policies against real data traces from the Parallel Workloads Archive~\cite{feitelson2014experience,rep}.
We consider the ``cleaned version'' of two datasets commonly used for HPC system benchmarks: namely,
the SDSC (San Diego Supercomputer Center) SP2 log~\cite{SDSC}
and
the KIT (Karlsruhe Institute of Technology) FH2 log~\cite{KIT}.
For each job accounted for in this dataset,
we follow the Standard Workload Format (SWF)~\cite{StandardWF}
to extract arrival time (Submit Time), service time (Run Time) and server need (Number of Allocated Processors).
%
Within these datasets, most jobs require a number of cores that is a power of two; for SDSC SP2, these jobs account for 84.4\% of all jobs.
Therefore, they are ideal to stress the performance of the proposed policy $\BS$ because within this setting, ServerFilling-SRPT minimizes the mean response time in heavy traffic~\cite{Grosof22SRPT}.
To stress the performance of our algorithm even further, we remove from our analysis those jobs whose server needs is not a power of two.
To limit the cost of simulations, we consider
jobs with a server need no larger than 64.
In our framework, this gives two models with $C=7$ classes for both datasets, whose parameters are given in Tables~\ref{tab:SDSC} and~\ref{tab:KIT}.
\begin{table}[h]
\begin{center}
\def\arraystretch{1.1}%
 \begin{tabular}{cccc}
\hline
\hline
$\mathbb{E}[D_i]$ & std($D_i$) & $n_i$ & $\alpha_i$\\
\hline
10519.71  &  18267.03  &  1  &  0.2321\\
1436.82  &  6250.19  &  2  &  0.1496\\
5643.69  &  18123.7  &  4  &  0.1624\\
9248.53  &  18468.51  &  8  &  0.1652\\
10601.46  &  17050.63  &  16  &  0.156\\
12139.59  &  22654.86  &  32  &  0.0807\\
8302.33  &  19074.81  &  64  &  0.054\\
\hline
\hline
\end{tabular}
\end{center}
\caption{Model parameters extracted from the SDSC SP2 dataset.}
\label{tab:SDSC}
\begin{center}
\def\arraystretch{1.1}%
 \begin{tabular}{cccc}
\hline
\hline
$\mathbb{E}[D_i]$ & std($D_i$) & $n_i$ & $\alpha_i$\\
\hline
1845.19  &  11440.31  &  1  &  0.7851\\
1470.13  &  5237.83  &  2  &  0.018\\
11169.87  &  38631.83  &  4  &  0.0406\\
3167.33  &  19727.29  &  8  &  0.0137\\
5706.45  &  17212.04  &  16  &  0.0539\\
60673.08  &  92531.56  &  32  &  0.0493\\
61343.42  &  106094.97  &  64  &  0.0393\\
\hline
\hline
\end{tabular}
\end{center}
\caption{Model parameters extracted from the KIT FH2 dataset.}
\label{tab:KIT}
\end{table}
Then, Figure~\ref{fig:KIT} plots the simulated response times
within both datasets and for $k=512$ and $k=1024$ servers.
%
%
As expected ServerFilling-SRPT and First-Fit SRPT always provide the best delays.
Also, all plots show that the proposed policy BalanceSplitting provides significantly better results than ServerFilling, which is preemptive, and FCFS, for almost all loads. We ought this gain to the large variability of service times across job classes and that our framework can reduce the interference between large and small jobs.
In fact, especially in Table~\ref{tab:KIT} we observe that the jobs that require 32 or 64 servers have huge service time requirements (relative to the other jobs), and within $\BS$, their execution does not interfere with the execution of smaller jobs.
\begin{figure*}
\centering
\makebox[\textwidth][c]{\hspace{0.5cm}\includegraphics[width=1.02\textwidth]{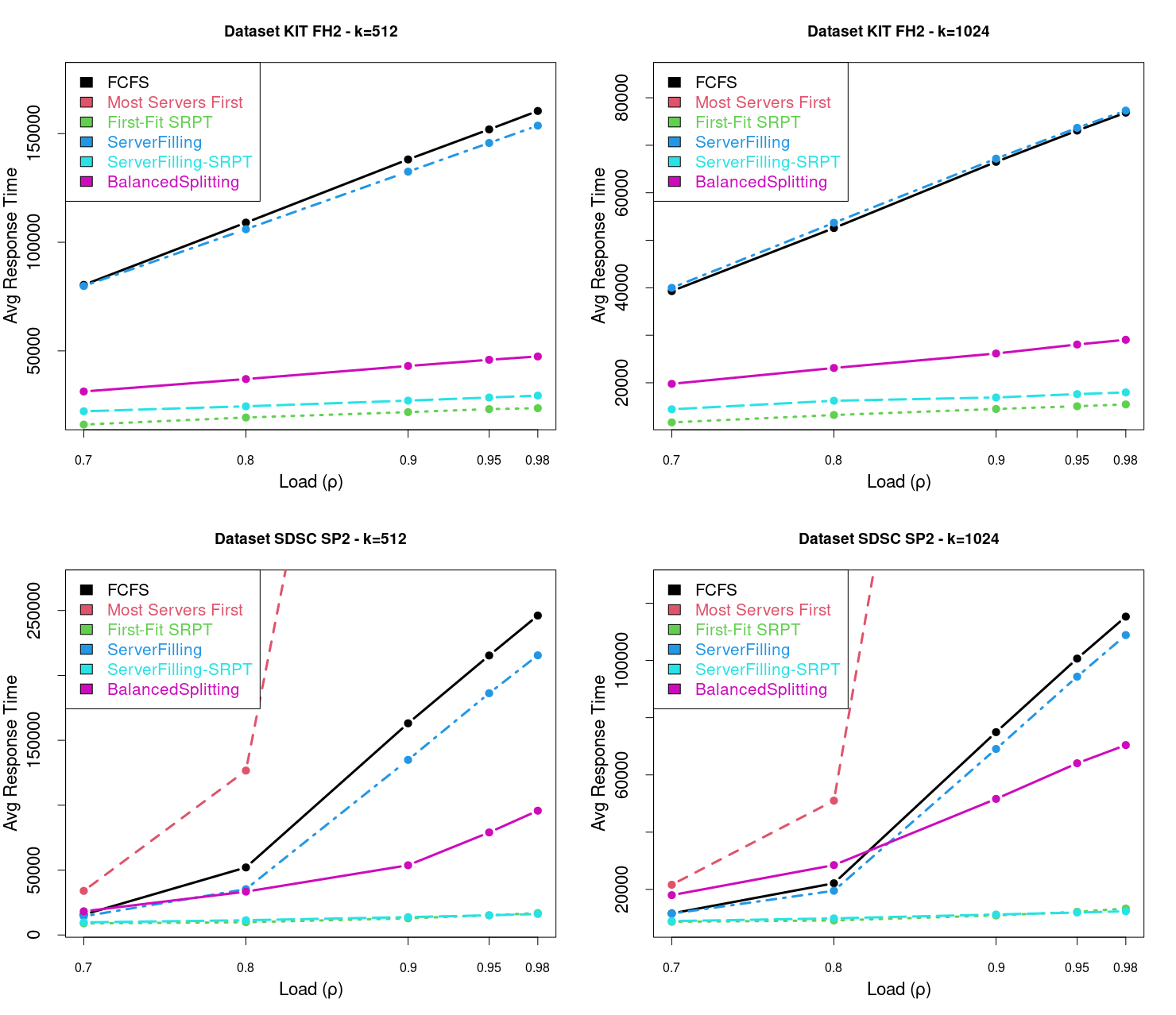}}%
\caption{Simulated response times within the KIT FH2 and SDSC SP2 datasets by varying the network load and with respect to $k=512$ (left) and $k=1024$ (right) overall servers.
The auxiliary policy $\pi$ used within BalanceSplitting ($\BS$) is FCFS.}
%
%
%
%
\label{fig:KIT}
\end{figure*}

%


%

\section{Conclusion}

We have proposed Balanced-Splitting, a nonpreemptive and job-size oblivious scheduling policy for multiserver jobs that operate on a static and balanced partitioning of the server set.
Our main results show that the proposed policy has the zero-wait property asymptotically. More precisely,
the probability that a randomly arriving job needs to wait for service vanishes when the system size grows to infinity. Numerical simulations have shown that Balanced-Splitting provides much better delay performance than other common scheduling policies such as FCFS. Also, its average response time is competitive with the one induced by state-of-the-art preemptive and size-aware policies.

We have assumed that the static and balanced partitioning of the server set only depends on the class of each job.
In our work, the class of a job is essentially determined by its server need, which is reasonable because this information is available precisely. To define the server set partitioning, we have assumed the per-class knowledge of the probability distribution of service times, which in practice can be obtained by user profiling techniques.
We believe that a better partitioning of the server set may be obtained in a size-aware setting. Here, a class of a job would be determined by its server need \emph{and} whether its service time belongs to a specified interval.
Of course, the scheduler needs to know the size of each incoming job to properly assign to the proper subset of servers and, in contrast to the approach followed in this work, this would make any policy size-ware.
On the other hand, we believe that a balanced size-aware partitioning yields better delay performance. We leave this question as future research.



\ifCLASSOPTIONcaptionsoff
  \newpage
\fi



\bibliographystyle{IEEEtran}
%



%


\begin{IEEEbiography}[{\includegraphics[width=1in,height=1.25in,clip,keepaspectratio]{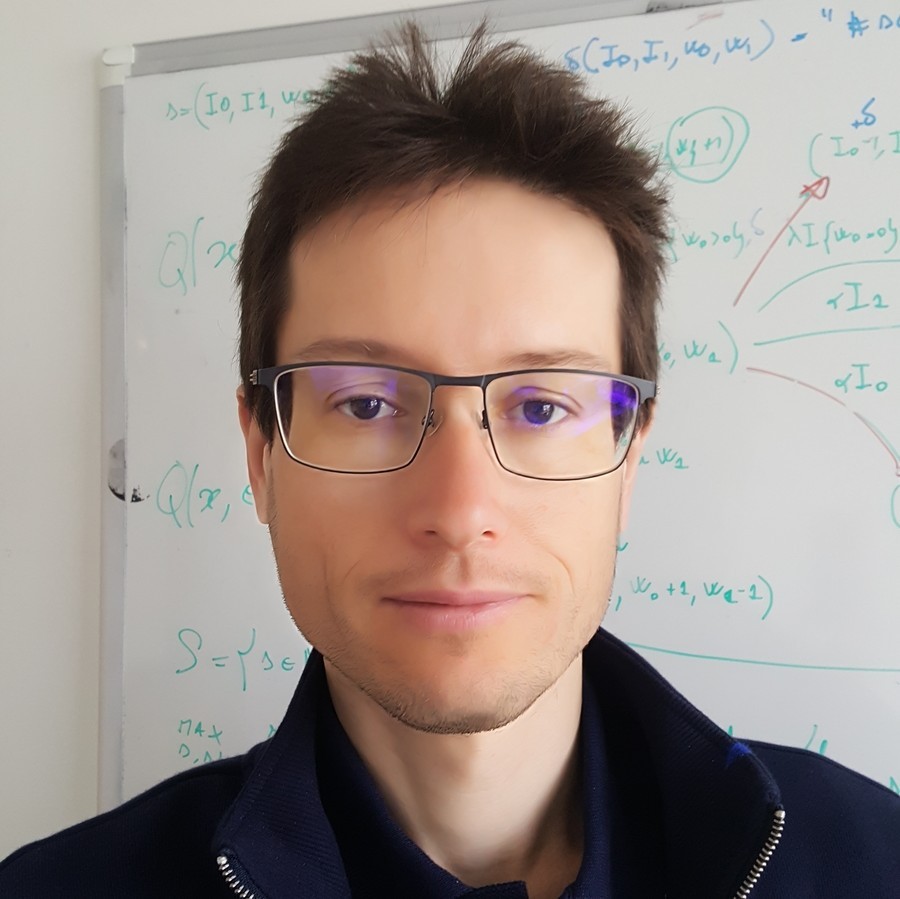}}]{Jonatha Anselmi}
is a tenured researcher at the French National Institute for Research in Digital Science and Technology (Inria), since 2014. Prior
to this, he was a full-time researcher at the Basque Center for Applied
Mathematics and a postdoctoral researcher at Inria. He received his PhD
in computer engineering at Politecnico di Milano (Italy) in 2009.
At the intersection of applied mathematics, computer science and engineering, his research interests focus on decision making under uncertainty, with particular emphasis on the development of highly-scalable algorithms that minimize congestion and operational costs of large-scale distributed systems.
\end{IEEEbiography}

\begin{IEEEbiography}[{\includegraphics[width=1in,height=1.25in,clip,keepaspectratio]{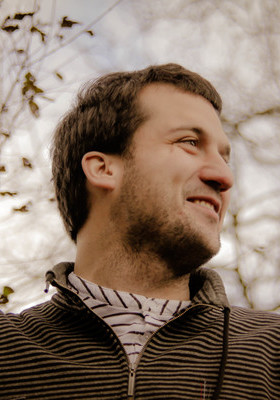}}]{Josu Doncel}
is an associate professor in the University of the Basque Country. He obtained from the same university the Industrial Engineering degree in 2007, the Mathematics degree in 2010 and, in 2011, the Master degree in Applied Maths and Statistics. He received in 2015 the PhD degree from Universit\'e de Toulouse (France). He has previously held research positions at LAAS-CNRS (France), INRIA Grenoble (France) and BCAM-Basque Center for Applied Mathematics (Spain), teaching positions at ENSIMAG (France), INSA-Toulouse (France) and IUT-Blagnac (France) and invited professor positions at CentraleSupelec (France), Inria Paris (France) and David laboratory (France).
\end{IEEEbiography}



\vfill


\end{document}